\newcommand{\RR}{{\mathbb{R}}}
\newcommand{\CC}{{\mathbb{C}}}
\newcommand{\ZZ}{{\mathbb{Z}}}
\newcommand{\pa}{\partial}
\newcommand{\ii}{{\rm i}}
\newcommand{\dd}{{\rm d}}
\newcommand{\sfrac}[2]{{\textstyle\frac{#1}{#2}}}
\newcommand{\Tr}{\mathrm{Tr}}
\newcommand{\g}{\mathfrak{g}}
\newcommand{\gggg}{e}
\newcommand{\su}{\mathfrak{su}}
\newcommand{\R}{{\mathbb{R}}}
\newcommand{\Z}{{\mathbb{Z}}}
\newcommand{\N}{{\mathbb{N}}}
\newcommand{\E}{{\mathbb{E}}}
\newcommand{\SU}{{\mathrm{SU}}}
\newcommand{\xvec}{\mathbf{x}}
\newcommand{\ivec}{\mathbf{i}}
\newcommand{\jvec}{\mathbf{j}}
\newcommand{\kvec}{\mathbf{k}}
\newcommand{\cvec}{\mathbf{c}}
\newcommand{\nvec}{\mathbf{n}}
\newcommand{\Xvec}{\mathbf{X}}
\newcommand{\omvec}{\mbox{\boldmath $\omega$}}
\newcommand{\sigvec}{\mbox{\boldmath $\sigma$}}
\newcommand{\beq}{\begin{equation}}
\newcommand{\eeq}{\end{equation}}
\newcommand{\bea}{\begin{eqnarray}}
\newcommand{\eea}{\end{eqnarray}}
\newcommand{\ben}{\begin{eqnarray*}}
\newcommand{\een}{\end{eqnarray*}}
\newcommand{\bem}{\begin{enumerate}}
\newcommand{\eem}{\end{enumerate}}
\newcommand{\Id}{{\mathrm{Id}}}
\newcommand{\ra}{\rightarrow}
\newcommand{\wt}{\widetilde}
\newcommand{\ol}{\overline}
\newcommand{\less}{\backslash}
\newcommand{\ignore}[1]{}
\theoremstyle{plain}
\newtheorem{theorem}{Theorem}
\newtheorem{proposition}[theorem]{Proposition}
\newtheorem{prop}[theorem]{Proposition}
\theoremstyle{definition}
\newcommand{\news}{\setcounter{equation}{0}}
\begin{document}

\title{A point particle model of lightly bound skyrmions}
\author{Mike Gillard${}^2$, Derek Harland${}^1$, Elliot Kirk${}^3$, Ben Maybee${}^1$ \\ and Martin Speight${}^1$\\
\small ${}^1$ School of Mathematics, University of Leeds, Leeds LS2 9JT, UK \\
\small ${}^2$ The Wolfson School of Engineering, University of Loughborough, Loughborough LE11 3TU, UK\\
\small ${}^3$ Department of Mathematical Sciences, University of Durham, Durham DH1 3LE, UK}

\normalsize
\date{}

\maketitle

\begin{abstract}
A simple model of the dynamics of lightly bound skyrmions is developed in which skyrmions are replaced by point particles, each
carrying an internal orientation.
The model accounts well for the static energy
minimizers of baryon number $1\leq B\leq 8$ obtained by numerical simulation of the full field theory. For $9\leq B\leq 23$, a large number of static solutions of the point particle model are found, all closely resembling size $B$ subsets of a face centred cubic lattice, with
the particle orientations dictated by a simple colouring rule. Rigid body quantization of these solutions is performed, and the spin and isospin of the corresponding ground states extracted. As part of the quantization scheme, an algorithm to compute the symmetry group of an oriented point cloud, and to determine its corresponding Finkelstein-Rubinstein constraints, is devised. 
\end{abstract}

\section{Introduction}
\label{sec:1}
\news

The Skyrme model is an effective theory of nuclear physics in which nucleons emerge as topological solitons in a field whose small amplitude travelling waves represent pions. It thus provides a unified treatment of both nucleons and the mesons which, in the Yukawa picture, are responsible for the strong nuclear forces between them. While the Skyrme model has been superceded as a fundamental model of strong interactions by QCD, interest in the model revived once it was recognized to be a possible low energy reduction of QCD in the limit of large $N_c$ (number of colours)
\cite{wit-sky1,wit-sky2}, and much work has been conducted to extract phenomenological predictions about nuclei from standard versions of the model
\cite{leemansch,manmanwoo,lauman,feilauman}. Many of these predictions are in good qualitative agreement with experiment, and recent improvements in skyrmion quantization schemes offer hope of significant further improvement to come
\cite{hal,halkinman}. 

One area in which standard versions of the model perform poorly, however, is that of nuclear binding energies: typically, classical skyrmions are much more tightly bound than the nuclei they are meant to represent (by a factor of 15 or so). In recent years, no fewer than three variants of the model have been proposed which seek to remedy this problem. In each case, the model is, by design, a small perturbation of a Skyrme model in which the binding energies vanish exactly. Perhaps the most radical proposal,
due to Sutcliffe and
motivated by holography, couples the Skyrme field to an infinite tower of vector mesons \cite{sut-holographic}. Small but nonvanishing binding energies are (conjecturally) introduced by truncating this infinite tower at some high but finite level. This proposal, while elegant, has so far not been amenable to detailed analysis. A second proposal, due to Adam, Sanchez-Guillen and Wereszczynzki, starts with a model which is invariant under volume preserving diffeomorphisms of space, then perturbs it by mixing with a small fraction of the conventional Skyrme energy \cite{adasanwer}. Skyrmions in this model have the attractive feature of being somewhat akin to liquid drops. However, the large (in fact, infinite dimensional) symmetry group of the unperturbed model is extremely problematic for numerical simulations, and the shapes and symmetries of classical skyrmions, even for rather low baryon number ($B\geq 3$) are, so far, not known in this model in the regime of realistically small binding energy \cite{gilharspe}. 

In this paper we will study the third (and arguably least radical) proposal, originally due to one of us \cite{har1}. This amounts to making a nonstandard choice of potential term in the standard Skyrme lagrangian and, more importantly, radically shifting the weighting of the derivative terms from the quadratic to the quartic. The resulting model is still amenable to numerical simulation, but its classical solutions are quite different from conventional skyrmions: the lowest energy Skyrme field of baryon number $B$ now resembles a loosely bound collection of $B$ spherically symmetric unit skyrmions, rather than a tightly bound object in which the skyrmions have merged and lost their individual identities. In the terminology of \cite{salsut}, which studied a $(2+1)$ dimensional analogue of the model, skyrmions in this lightly bound Skyrme model prefer to hold themselves aloof from one another. Numerical analysis reveals \cite{gilharspe} that they also prefer to arrange themselves on the vertices of a face centred cubic spatial lattice, with internal orientations dictated by their lattice position. This suggests that, unlike conventional
skyrmions, lightly bound skyrmions can be modelled as point particles, each carrying an internal orientation, interacting with one another through some pairwise interaction potential whose minimum encourages them to sit at a fixed separation with their internal orientations correlated. The aim of this paper is to derive such a simple point particle model, compare its predictions with numerical simulations of the full field theory, and use it to extract, via rigid body quantization, phenomenological predictions about nuclei with baryon number $2\leq B\leq 23$. A similar programme (minus quantization) for the $(2+1)$ dimensional analogue model was completed in \cite{salsut}. 

As we shall see, the point particle model accounts almost flawlessly for static skyrmions with $1\leq B\leq 8$, where comparison with simulations of the full field theory is available. For $B\geq 9$, it predicts a rapid proliferation of nearly degenerate skyrmions as $B$ grows, all rather close to size $B$ subsets of the face centred cubic lattice. In comparison with conventional skyrmions, these typically have rather little symmetry, and anisotropic mass distribution. Determining the symmetries of these configurations is an interesting and important task, nonetheless, as they determine the Finkelstein-Rubinstein constraints on quantization. Usually, symmetries of skyrmions are determined by ad hoc means: one looks at suitable pictures of the skyrmion, predicts a symmetry by eye, then checks it by operating on the numerical data. By contrast, we will develop an algorithm which automatically computes the symmetry group of any point particle configuration. This allows us to completely automate the rigid body quantization scheme. The result is, as a phenomenological model of nuclei, moderately successful: rigid body ground states plausibly account for the lightest nucleus of baryon number $B$ for 12 of the 23 values considered. Presumably this can be improved by replacing rigid body quantization by something more sophisticated.

The rest of the paper is structured as follows. In section \ref{sec:lbdsm} we review the lightly bound Skyrme model, focussing on its spin-isospin symmetry and associated inertia tensors. In section \ref{sec:ppm} we introduce the point particle model, then in section
\ref{sec:numres} we describe a numerical scheme to find its energy minimizers, and present the results of this scheme. In section
\ref{sec:rbq} we formulate the rigid body quantization of our classical energy minimizers, focussing particularly on the Finkelstein-Rubinstein constraints. Some concluding remarks and possible future directions of development are presented in section 
\ref{sec:pontificate}.

\section{The lightly bound Skyrme model}
\label{sec:lbdsm}
\news

The field theory of interest is defined as follows. There is a single 
Skyrme field $U:\RR^{3,1}\to\mathrm{SU}(2)$, required to satisfy the boundary condition $U(t,\xvec)=1$ as $|\xvec|\ra\infty$ for all $t$. Such a field, if smooth, has at each $t$, a well-defined integer valued topological charge
\begin{equation}
\label{baryon number}
 B = -\frac{1}{24\pi^2}\int_{\RR^3}\epsilon_{ijk}\Tr(R_iR_jR_k) \dd^3 x,
\end{equation}
the topological degree of the map $U(t,\cdot):\R^3\cup\{\infty\}\ra \mathrm{SU}(2)\cong S^3$. Since the field is smooth, $B(t)$ is smooth and integer valued, hence automatically conserved. Physically it is interpreted as the baryon number of the field $U$. The
right invariant current associated with $U$ is $R_\mu=(\pa_\mu U)U^\dagger$, in terms of which the lagrangian density is
\begin{multline}
\label{Skyrme lagrangian}
 \mathcal{L} = \frac{F_\pi^2}{16\hbar} \Tr(R_\mu R^\mu) + \frac{\hbar}{32\gggg^2} \Tr([R_\mu,R_\nu],[R^\mu,R^\nu]) \\
 - \frac{F_\pi^2 m_\pi^2}{8\hbar^3}\Tr(1-U) - \frac{F_\pi^4\gggg^2\alpha}{32(1-\alpha)^2} (\sfrac12\Tr(1-U))^4 .
\end{multline}
Here $F_\pi$ is the pion decay constant, $m_\pi$ the pion mass, and $\gggg>0$, $0\leq\alpha<1$ are dimensionless parameters.  In \cite{gilharspe} the following values were chosen for these parameters so that classical binding energies in the model are comparable with experimentally-measured nuclear binding energies:\footnote{The value for $F_\pi$ recorded here corrects a typographical error in \cite{gilharspe}}
\begin{equation}
 F_\pi = 36.1\,\mathrm{MeV},\quad m_\pi = 303\,\mathrm{MeV},\quad \gggg = 3.76,\quad \alpha=0.95.
\end{equation}
There is certainly room for improvement in this calibration: for example, obtaining the correct pion mass was not a priority in \cite{gilharspe}, and we expect that a more thorough analysis could result in a parameter set for which $m_\pi$ is closer to its experimental value of 137MeV.  However, the aim in the present paper is not to fine-tune the parameters, but rather to study qualitative properties of static solutions, which we expect to be insensitive to details of the calibration.

It will be convenient to use $F_\pi/4\gggg\sqrt{1-\alpha}$ as a unit of energy and $2\sqrt{1-\alpha}/F_\pi \gggg$ as a unit of length; in these units the lagrangian takes the form $L=T-V$, where
\begin{align}
T &= \int_{\RR^3}\Big[ -\frac12(1-\alpha)\Tr(R_0R_0) - \frac18\Tr([R_0,R_i][R_0,R_i]) \Big]\dd^3 x, \\
\nonumber
V &= \int_{\RR^3} \Big[ (1-\alpha)\left( -\frac12\Tr(R_iR_i) + m^2\Tr(1-U) \right) \\
& \qquad\qquad- \frac{1}{16}\Tr([R_i,R_j][R_i,R_j]) + \alpha(\sfrac12\Tr(1-U))^4 \Big] \dd^3 x,
\label{Skyrme energy}
\end{align}
and $m:=(2m_\pi\sqrt{1-\alpha}/F_\pi \gggg)$. In the parameter set given above, $m=1.00$. Note that when $\alpha=0$, $L$ is the lagrangian of the conventional Skyrme model with pion mass, while for $\alpha=1$ this is a completely unbound model \cite{har1}: there is a topological energy bound of the form $V\geq \mathrm{const}|B|$, but this is attained only when $|B|\leq 1$. 

The first approximation to a nucleus containing $B$ nucleons is a static Skyrme field $U:\RR^3\to\mathrm{SU}(2)$ of degree
$B$ which minimizes the potential energy $V$.  Thus it is important to identify static classical energy minimizers.  These are referred to as skyrmions.
A better approximation to a nucleus is obtained by allowing solitons to carry spin and isospin.  The lagrangian is invariant under a left action of the group $G:=\mathrm{SU}(2)_I\times\mathrm{SU}(2)_J$, defined by
\beq
[(g,h)\cdot U](t,\mathbf{x}):= gU(t,h^{-1}\mathbf{x}h) g^{-1}
\eeq
where we have identified physical space $\R^3$ with the Lie algebra $\su(2)$ via $\xvec \cong ix^j\sigma_j$, $\sigma_1,\sigma_2,\sigma_3$ being the Pauli matrices, to define the action of $h$ on $\xvec$. Equivalently,
\beq
 [(g,h)\cdot U](t,\mathbf{x}):= gU(t,R(h)^{-1}\mathbf{x}) g^{-1}
\eeq
where $R(h)$ is the $SO(3)$ matrix with entries
\beq
R(h)_{ij}=\frac12\Tr
 (h\sigma_ih^{-1}\sigma_j). 
\eeq
The conserved quantities associated with these symmetries are isospin and spin.  We refer to transformations $g\in\mathrm{SU}(2)_I$ as isorotations, in analogy with rotations $h\in\mathrm{SU}(2)_J$.

Every $\omega\in\g:=\su(2)_I\oplus\su(2)_J$ defines a one-parameter subgroup
$\{\exp(t\omega)\: :\: t\in\R\}$ of $G$ isomorphic to $S^1$, whose action on a static 
skyrmion $U$ generates a rigidly isorotating and rotating skyrmion, $U_\omega=\exp(t\omega)\cdot U$, of constant kinetic energy
$T[U_\omega]$. The mapping $\omega\mapsto T[U_\omega]$ is a quadratic form on $\g$, and hence defines a unique symmetric bilinear form
$\Lambda:\g\times\g\ra\R$ called the \emph{inertia tensor} of the skyrmion $U$. 
By its definition, $\Lambda$ vanishes on the subspace of $\g$ tangent to the
isotropy group $G^U$ of $U$ (that is, the subgroup $G^U:=\{(g,h)\in G\: :\: (g,h)\cdot U=U\}<G$ which leaves $U$ unchanged).
If $G^U$ is discrete, as is the case for all the skyrmions studied in this paper except when $B=1$,
then $\Lambda$ is a positive bilinear form, and thus defines a left invariant Riemannian metric on $G$. 
In order to identify spin and isospin quantum numbers of skyrmions corresponding to those of nuclei, isorotations and rotations needed to be treated quantum mechanically rather classically.  The inertia tensor plays an important role in the simplest quantization scheme, known as rigid body quantization, which will be reviewed in section \ref{sec:rbq}, and amounts to quantizing geodesic motion on $(G,\Lambda)$, subject to certain symmetry constraints required to give skyrmions fermionic exchange statistics. Clearly, by choosing a basis for $\su(2)$, we obtain a basis for $\g$ which can be used to represent $\Lambda$ as a real symmetric $6\times 6$ matrix. We shall consistently represent inertia tensors
in this way, having chosen the basis $[-\sfrac\ii2\sigma_1,-\sfrac\ii2\sigma_2,-\sfrac\ii2\sigma_3]$ for $\su(2)$.

\section{The point particle model}
\label{sec:3}\label{sec:ppm}
\news

% Identification of skyrmion with oriented point particle
Extensive numerical simulations reported in \cite{gilharspe} showed that skyrmions in the lightly bound Skyrme model with $B>0$ invariably resemble collections of $B$ particles.  Encouraged by this observation, we have developed a point particle model in which a Skyrme field $U$ with baryon number $B$ is replaced by $B$ oriented point particles in $\RR^3$.

To explain how the model is derived, we begin by recalling the structure of the simplest skyrmion, which has $B=1$, and is of ``hedgehog'' form
\begin{equation}
\label{hedgehog ansatz}
U_H(\mathbf{x}) = \exp(f(r)\ii\sigma_j x_j/r),
\end{equation}
with $f(r)$ a real function satisfying $f(0)=\pi$, $f(r)\to0$ as $r\to\infty$, and $r=|\xvec|$.  The profile function
is determined by solving (numerically) the Euler-Lagrange equation for $V$ restricted to fields of hedgehog form, a certain nonlinear second order ODE for $f$. One finds that $U_H$ has total energy $M_H:=V[U_H]\approx87.49$, and its energy density is monotonically decreasing with $r$ and concentrated around the origin.
The  1-skyrmion has a high degree of symmetry: if $g\in\mathrm{SU}(2)$ then
\[
gU_H(R(g)^{-1}\mathbf{x})g^{-1} = U_H(\mathbf{x}).
\]
In other words, $G^{U_H}$ is the diagonal subgroup of $\mathrm{SU}(2)_I\times\mathrm{SU}(2)_J$.

This basic skyrmion can be moved and rotated using symmetries of the model.  A 1-skyrmion with position $\mathbf{x}_0\in\RR^3$ and orientation $q_0\in\mathrm{SU}(2)$ is given by
\begin{equation}
\label{oriented hedgehog}
U(\mathbf{x};\mathbf{x}_0,q_0) = U_H(R(q_0)(\mathbf{x}-\mathbf{x}_0)).
\end{equation}
The energy-minimizers with $2\leq B\leq 8$ resemble superpositions of fields of this type \cite{gilharspe}.  More precisely, their energy densities are concentrated at $B$ well-separated points $\mathbf{x}_1,\ldots,\mathbf{x}_B$, and near each such point $\mathbf{x}_a$ the field $U$ is approximately of the above form for some $q_a$.  These positions and orientations are the basic degrees of freedom in our point particle model, and will be allowed to depend on time $t$.  The lagrangian for this point particle model takes the form
\begin{equation}
\label{point particle lagrangian}
L_{pp}=\sum_{a=1}^B \left(\frac12M|\dot{\mathbf{x}}_a|^2 + \frac12L |\dot{q}_a|^2\right) - BM - V(\mathbf{x}_1,\ldots,\mathbf{x}_B,q_1,\ldots,q_B),
\end{equation}
where $|\dot{q}|^2:=\frac12\Tr(\dot{q}\dot{q}^\dagger)$ and 
\begin{equation}
\label{point particle potential}
V(\mathbf{x}_1,\ldots,\mathbf{x}_B,q_1,\ldots,q_B) = \sum_{1\leq a<b\leq B} V_{int}(\mathbf{x}_a,q_a,\mathbf{x}_b,q_b),
\end{equation}
is an interaction potential.

The terms involving time derivatives of $\mathbf{x}_a$ and $q_a$ represent the kinetic energy of a moving skyrmion.  Their coefficients could be deduced from the Skyrme model.  It is known that the 1-skyrmion has inertia tensor
\[ \Lambda_H=L_H \begin{pmatrix} \mathrm{Id}_3 & -\mathrm{Id}_3 \\ -\mathrm{Id}_3 & \mathrm{Id}_3 \end{pmatrix}, \]
where
\[ L_H = \frac{16\pi}{3} \int_0^\infty \sin^2f\big((1-\alpha)r^2+r^2(f')^2+\sin^2f\big)\dd r\approx 53.49. \]
From this it follows that the kinetic energy of a rigidly rotating skyrmion should take the form $\frac12L_H |\dot{q}_0|^2$, suggesting that $L=L_H$ in the lagrangian \eqref{point particle lagrangian}.  Similarly, the kinetic energy of a 1-skyrmion moving with velocity $\dot{\mathbf{x}}_0$ is $\frac12M_H|\dot{\mathbf{x}}_0|^2$, where $M_H\approx87.49$ is the potential energy of a static 1-skyrmion.  This suggests choosing $M=M_H$ in the lagrangian.  However, we have chosen to fix the coefficients by an alternative phenomenological method that will be explained in the next section.

\subsection{Symmetries of the interaction potential}

The point particle model inherits an action of $G=\mathrm{SU}(2)_I\times\mathrm{SU}(2)_J$ from the Skyrme model.  The action of $(g,h)\in G$ on the field $U(\mathbf{x};\mathbf{x}_0,q_0)$ defined in equation \eqref{oriented hedgehog} is
\begin{align*}
U(\mathbf{x};\mathbf{x}_0,q_0) &\mapsto gU(R(h)^{-1}\mathbf{x};\mathbf{x}_0,q_0)g^{-1}\\
& = gU_H(R(q_0)(R(h)^{-1}\mathbf{x}-\mathbf{x}_0))g^{-1} \\ 
& = U_H(R(g)R(q_0)R(h)^{-1}(\mathbf{x}-R(h)\mathbf{x}_0)) \\ 
& = U(\mathbf{x};R(h)\mathbf{x}_0,gq_0h^{-1}).
\end{align*}
Therefore the action of $(g,h)$ on a point particle configuration is
\[ (\mathbf{x}_a,q_a)\mapsto (R(h)\mathbf{x}_a,gq_ah^{-1}),\qquad a=1,\ldots,B. \]
The point particle lagrangian should be invariant under these transformations, and under translations $\mathbf{x}_a\mapsto\mathbf{x}_a+\mathbf{c}$ for $\mathbf{c}\in\RR^3$.  It should be invariant under changes of the signs of any of the $q_a$, because $U(\mathbf{x};\mathbf{x}_0,-q_0)=U(\mathbf{x};\mathbf{x}_0,q_0)$.  It should also be invariant under permutations of the particles, because configurations of particles that are the same up to a re-ordering describe the same Skyrme field. Finally, the Skyrme
model is invariant under the inversion
\[
U(\mathbf{x})\mapsto U(-\mathbf{x})^{\dagger},
\]
which is equivalent, for a field of the form \eqref{oriented hedgehog}, to $(\xvec_0,q_0)\mapsto(-\xvec,q_0)$. Hence, our point
particle lagrangian should be invariant under
\beq\label{inv_sym}
(\xvec_a,q_a)\mapsto (-\xvec_a,q_a).
\eeq

The kinetic terms in \eqref{point particle lagrangian} obviously have these symmetries.  Demanding that the potential \eqref{point particle potential} is also invariant imposes constraints on the function $V_{int}(\mathbf{x}_1,q_1,\mathbf{x}_2,q_2)$ which we now describe.

Translation symmetry implies that $V_{int}(\mathbf{x}_1,q_1,\mathbf{x}_2,q_2)$ depends on the positions of the skyrmions only through their relative position $\mathbf{X}:=\mathbf{x}_1-\mathbf{x}_2$.  Isorotation symmetry implies that it depends on $q_1,q_2$ only through the isorotation-invariant combination $Q=q_1^{-1}q_2$.  Thus
\[ V_{int}(\mathbf{x}_1,q_1,\mathbf{x}_2,q_2) = V_{red}(\mathbf{X},Q), \]
for some function $V_{red}$ on $\R^3\less\{0\}\times \mathrm{SU}(2)$. Invariance under $q_1\mapsto-q_1$ implies
\begin{equation}
\label{sym0}
V_{red}(\mathbf{X},-Q)=V_{red}(\mathbf{X},Q),
\end{equation}
while rotational symmetry demands that
\begin{equation}
\label{sym1}
V_{red}(R(h)\mathbf{X},hQh^{-1}) = V_{red}(\mathbf{X},Q)\quad\forall h\in\mathrm{SU}(2)_J.
\end{equation}
A permutation $(\mathbf{x}_1,q_1,\mathbf{x}_2,q_2)\mapsto(\mathbf{x}_2,q_2,\mathbf{x}_1,q_1)$ changes the sign of $\mathbf{X}$ and inverts $Q$, so permutation invariance implies that
\begin{equation}
\label{sym2}
V_{red}(-\mathbf{X},Q^{-1}) = V(\mathbf{X},Q).
\end{equation}
Finally, symmetry under inversion \eqref{inv_sym}, implies
\begin{equation}
\label{sym3}
V_{red}(-\mathbf{X},Q) = V_{red}(\mathbf{X},Q).
\end{equation}

To proceed further, it is helpful to think of $V_{red}$ as a one-parameter family of
real functions $V_\rho$ on $S^2\times\SU(2)$, parametrized by $\rho:=|\Xvec|\in (0,\infty)$. We may expand each such function in a convenient basis for $L^2(S^2\times \SU(2))$, for example, the basis of eigenfunctions of the Laplacian. A natural truncation to finite dimensions is obtained by keeping only eigenfunctions up to a fixed finite eigenvalue. The effect of this truncation is to exclude from $V_{red}$ terms with fast orientation dependence. This motivates the following definition: for each $\lambda$ in the spectrum of $\Delta_{S^2\times S^3}$, let
$E_\lambda$ denote the corresponding eigenspace, and for any $\mu\geq0$,
\beq
F_\mu=\bigoplus_{\lambda\leq\mu}E_\lambda.
\eeq
Let $C^\infty_\mu$ denote the space of smooth functions on $V:\R^3\less\{0\}\times \SU(2)\ra\R$ such that $V_\rho\in F_\mu$ for
all $\rho$. 

\begin{prop}\label{prop1} Let $\mu\in[0,20)$ and $V$ be a function in $C^\infty_\mu$ invariant under the symmetries
\eqref{sym0}-\eqref{sym3}. Then there exist functions $V_i:(0,\infty)\ra\R$, $i=0,1,2$, such that
\beq\label{Vredansatz}
V(\mathbf{X},Q) = V_0(|\mathbf{X}|) + V_1(|\mathbf{X}|)\Tr(R(Q)) + V_2(|\mathbf{X}|) \frac{\mathbf{X}\cdot R(Q)\mathbf{X}}{|\mathbf{X}|^2}.
\eeq
\end{prop}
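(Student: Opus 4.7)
The plan is to expand $V$ in joint eigenfunctions of the Laplacians on $S^2$ and $S^3\cong\SU(2)$, then use the symmetries \eqref{sym0}--\eqref{sym3} to cut the list of surviving modes down to three. By the Peter--Weyl theorem together with the theory of spherical harmonics, the eigenspace of $\Delta_{S^2\times S^3}$ at eigenvalue $\lambda=\ell(\ell+1)+n(n+2)$ is spanned by products $Y_\ell^m(\hat{\mathbf{X}})\,D^j_{ab}(Q)$, with $n=2j\in\ZZ_{\geq 0}$. Viewed as a representation of the diagonal $\SU(2)_J$ implicit in \eqref{sym1}, this eigenspace is isomorphic to $W_\ell\otimes W_j\otimes W_j^*$, where $W_k$ denotes the spin-$k$ irrep of $\SU(2)$. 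Since $V$ is smooth and lies in $C^\infty_\mu$, the coefficients of this expansion are smooth functions of $\rho=|\mathbf{X}|$, so it suffices to describe, for each allowed $(\ell,n)$ with $\lambda\leq\mu<20$, the space of modes that survive the remaining symmetries.

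I would impose the discrete symmetries first. Symmetry \eqref{sym0} yields the factor $(-1)^{2j}$ on $D^j_{ab}$ and so kills all modes with $n$ odd; symmetry \eqref{sym3} yields the factor $(-1)^\ell$ on $Y^m_\ell$ and kills all modes with $\ell$ odd. Enumerating the pairs $(\ell,n)$ with both entries even and $\lambda<20$ gives exactly four candidates, $(\ell,n)\in\{(0,0),(0,2),(2,0),(2,2)\}$, with eigenvalues $0,8,6,14$ respectively. Within each such eigenspace the \eqref{sym1}-invariants form the $\SU(2)_J$-trivial component of $W_\ell\otimes W_j\otimes W_j^*$. Using the Clebsch--Gordan decomposition of the adjoint representation, $W_j\otimes W_j^*\cong\bigoplus_{k=0}^{2j}W_k$, its dimension equals the multiplicity of $W_0$ in $W_\ell\otimes W_\ell$ when $\ell\leq 2j$, and is $0$ otherwise. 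Hence the pair $(2,0)$ contributes nothing, while the remaining three pairs each contribute a one-dimensional space of invariants.

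The final step is to identify the three generators explicitly and verify that \eqref{sym2} is then automatic. The $(0,0)$ generator is the constant $1$. The $(0,2)$ generator is the class function $\chi_1(Q)=\Tr(R(Q))$, which is manifestly fixed by $Q\mapsto Q^{-1}$. For $(2,2)$ the generator is obtained by pairing the symmetric-traceless tensor $\hat X_i\hat X_j-\sfrac{1}{3}\delta_{ij}$ (the spin-$2$ content of $Y^m_2$) with the symmetric-traceless part of $R(Q)_{ij}$ (the spin-$2$ component of $W_1\otimes W_1^*$); since $\hat X_i\hat X_j$ is already symmetric in $i,j$, the antisymmetric spin-$1$ piece of $R(Q)$ and the trace are automatically projected out, and the contraction evaluates to $\mathbf{X}\cdot R(Q)\mathbf{X}/|\mathbf{X}|^2-\sfrac{1}{3}\Tr(R(Q))$. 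This is fixed by $Q\mapsto Q^{-1}$ for the same symmetry reason. After absorbing the $\sfrac{1}{3}\Tr(R(Q))$ correction into the coefficient of the $(0,2)$ generator, the most general $V\in C^\infty_\mu$ satisfying \eqref{sym0}--\eqref{sym3} is a $\rho$-dependent linear combination of the three generators, which is exactly the ansatz \eqref{Vredansatz}. I expect the main technical work to be the $(2,2)$ case, where producing the explicit invariant requires the adjoint decomposition $W_1\otimes W_1^*=W_0\oplus W_1\oplus W_2$ and the bookkeeping above which identifies the $\SU(2)_J$-invariant contraction as the symmetric quadratic form $\mathbf{X}\cdot R(Q)\mathbf{X}$ modulo $\Tr(R(Q))$.
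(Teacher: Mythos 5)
Your proposal is correct and follows essentially the same path as the paper's proof: expand $V_\rho$ in Laplacian eigenfunctions on $S^2\times S^3$, use the parity symmetries \eqref{sym0} and \eqref{sym3} to restrict to even degrees, enumerate the four surviving eigenspaces with eigenvalue below $20$, apply a Clebsch--Gordan count to show that the diagonal $\SU(2)$-invariants form a one-dimensional space in each of $E_0$, $E_8$, $E_{14}$ and vanish in $E_6$, and then identify the three generators explicitly. The only cosmetic difference is your use of Wigner $D$-matrices where the paper works with harmonic homogeneous polynomials on $\R^3\times\R^4$.
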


\begin{proof}
Recall that the eigenvalues of the Laplacian on $S^n$ are $\lambda_d^{(n)}=d(d+n-1)$, $d=0,1,2,\ldots$,
and the corresponding eigenspaces, $\E_d^{(n)}$, are spanned by (the restrictions to $S^n\subset\R^{n+1}$ of) harmonic
homogeneous polynomials in $\R^{n+1}$ of degree $d$ \cite{bergaumaz}. It follows that the eigenvalues of $\Delta_{S^2\times S^3}$ are
$\lambda_d^{(2)}+\lambda_{d'}^{(3)}$ with eigenspaces $\E_{d}^{(2)}\otimes\E_{d'}^{(3)}$. By \eqref{sym0}, \eqref{sym3}, $V$ is
invariant under both $\Xvec\mapsto-\Xvec$ and $Q\mapsto-Q$, so we may restrict $d$ and $d'$ to only even values (homogeneous polynomials of odd degree are parity odd). Further, since $V\in C^\infty_\mu$ with $\mu<20$, each restriction $V_\rho$ lies in
\beq\label{wtf1}
E_0\oplus E_6\oplus E_8\oplus E_{14}=(\E_0^{(2)}\otimes\E_0^{(3)})\oplus(\E_2^{(2)}\otimes\E_0^{(3)})\oplus
(\E_0^{(2)}\otimes\E_2^{(3)})\oplus(\E_2^{(2)}\otimes\E_2^{(3)}).
\eeq
Now $\SU(2)$ acts on both $\E^{(2)}_d$ (by rotations of $S^2$) and $\E^{(3)}_{d'}$ (by conjugation on $\SU(2)$), and, by\eqref{sym1}, each $V_\rho$ is invariant under the combined action. In fact $\E^{(2)}_d\cong\R^{2d+1}$ and carries the irreducible spin $d$ representation of $\SU(2)$, while $\E^{(3)}_{d'}\cong\R^{d'+1}\otimes\R^{d'+1}$ where, for $d'=2\ell$, $\R^{d'+1}$ carries the irreducible spin $\ell$ representation of $\SU(2)$. In particular, $\E_0^{(3)}=\R$, on which $\SU(2)$ acts trivially, and
$\E_0^{(3)}$ decomposes into irreducible representations as
\beq
\E_0^{(3)}=\R\oplus\R^3\oplus\R^5.
\eeq
Now the tensor product $\R^{2d+1}\otimes\R^{2\ell+1}$ contains no trivial subrepresentation if $d\neq\ell$, and exactly one if
$d=\ell$. Hence, of the summands in \eqref{wtf1}, $E_0$, $E_8$ and $E_{14}$ each contain a one-dimensional subspace on which $\SU(2)$ acts trivially (while $E_6$ does not) and, by \eqref{sym1}, $V_\rho$ lies in the three-dimensional space spanned by these. 
Clearly $E_0^{triv}=E_0$ which is spanned by the constant function $(\Xvec,Q)\mapsto 1$. Consider the functions
\beq
(\Xvec,Q)\mapsto \Tr(Q),\qquad (\Xvec,Q)\mapsto \Xvec\cdot R(Q)\Xvec-\frac12\Tr R(Q)|\Xvec|^2.
\eeq
These are manifestly $\SU(2)$ invariant and extend to homogeneous polynomials on $\R^3\times\R^4$ of bidegree $(0,2)$ and $(2,2)$ respectively. Furthermore, one may readily check that these polynomials are harmonic (separately with respect to $\Xvec$ and
$Q$). Hence, they span $E_8^{triv}$ and $E_{14}^{triv}$ respectively. Noting that $|\Xvec|^2\equiv 1$ on $S^2$, the claim
follows. 
\end{proof}

From now on, we assume that $V_{red}$ lies in the truncated function space $C^\infty_{14}$, so that it has the structure prescribed by Proposition \ref{prop1}. 

Recall that, in the standard Skyrme model, the interaction potential for well separated skyrmion pairs can be modelled using the dipole formalism \cite{mansut}: far from its centre, a unit skyrmion looks like the field induced in the linearization of the Skyrme model about the vacuum, $U=1$, by an orthogonal triplet of scalar dipoles placed at the skyrmion's centre. The interaction potential for a skyrmion pair with relative position $\Xvec$ and orientation $Q$ can then be approximated by the interaction energy of a pair of triplets of dipoles held at relative displacement $\Xvec$ and orientation $Q$, interacting via the linear theory. This approximation introduces another useful constant associated with the unit skyrmion, namely the strength of the (necessarily equal) dipoles. In practice this is determined numerically by reading off a coefficient $C$ in the large $r$ asymptotics of the skyrmion profile function. This formalism is readily adapted to the lightly bound Skyrme model, producing an interaction potential of the form \eqref{Vredansatz} with
\bea
V_0(r) &=& 0 \nonumber \\
V_1(r) &=& -8\pi C^2(1-\alpha)\left(\frac{m}{r^2}+\frac{1}{r^3}\right)e^{-mr} \nonumber \\
\label{dipolepot}
V_2(r) &=& 8\pi C^2(1-\alpha)\left(\frac{m^2}{r}+\frac{3m}{r^2}+\frac{3}{r^3}\right)e^{-mr}.
\eea
The dipole strength (for $\alpha=0.95$ and $m=1$) is found numerically to be $C\approx14.58$. These formulae reproduce the usual prediction of attractive and repulsive channels for well-separated skyrmions. That is, $V_{red}$ is maximally attractive (increases fastest with $|\Xvec|$) if the orientations of the skyrmions differ by a rotation by $\pi$ about any direction orthogonal to $\Xvec$, is maximally repulsive if the orientations differ by a rotation by $\pi$ about $\Xvec$, and is nonmaximally repulsive if their orientations are equal. We refer to these three situations as the attractive, repulsive and product channels respectively.

The existence of these three channels allows us to fix the functions $V_0,V_1,V_2$ numerically by conducting scattering simulations of skyrmion pairs in the full  field theory, in similar fashion to Salmi and Sutcliffe's work on the $(2+1)$ dimensional model \cite{salsut}. We begin with a Skyrme field of the form
\beq\label{ac}
U_a(x_1,x_2,x_3)=U_H(x_1+\frac{s}{2},x_2,x_3)U_H(-(x_1-\frac{s}{2}),-x_2,x_3)
\eeq
where $s>0$ is large and $U_H$ is a unit hedgehog skyrmion defined (numerically) in a ball of radius less than $s/2$ (so 
$U_H(\xvec)=1$ for all $|\xvec|\geq s/2$, and the product above commutes). Such a field represents a pair of skyrmions located at $\xvec=(\pm s/2,0,0)$, that is, with separation $s$, in the attractive channel. Here, and henceforth, we define the skyrmion positions of a Skyrme field
$U:\R^3\ra \SU(2)$ to be those points where $U=-1$. We now allow $U$ to evolve with time according to the dynamics defined by the
lagrangian \eqref{Skyrme lagrangian}, using the fourth order spatial discretization employed by the energy minimization scheme of
\cite{gilharspe}, and a fourth order Runge-Kutta scheme with fixed time step for the time evolution. This numerical scheme conserved total energy $E=T+V$ to extremely high accuracy,
\beq
\max_t \frac{|E(t)-E(0)|}{E(0)}<2.4\times 10^{-5},
\eeq
for all the dynamical processes presented here. As the dipole model predicts, the skyrmions with these initial data slowly move towards one another, attain a minimum separation, then recede again. By recording their separation $s(t)$ and potential energy $V(t)$ at each time step, we recover a numerical approximation to the
attractive channel interaction potential which, according to \eqref{Vredansatz} is related to $V_0,V_1,V_2$ by
\beq
V_a(s)=V_0(s)-V_1(s)-V_2(s).
\eeq
We then repeat the process with intial data
\bea
U_r(x_1,x_2,x_3)&=&U_H(x_1+\frac{s}{2},x_2,x_3)U_H(-(x_1-\frac{s}{2}),x_2,-x_3)\\
U_p(x_1,x_2,x_3)&=&U_H(x_1+\frac{s}{2},x_2,x_3)U_H(x_1-\frac{s}{2},x_2,x_3)
\eea
which are in the repulsive and product channels respectively. To make the skyrmions approach one another and interact, we now Galilean boost them towards one another at low speed (v=0.1). Note that the reflexion symmetries of the initial data trap these fields in their respective channels for all time. From these numerical solutions we obtain numerical approximations to the repulsive and product channel interaction potentials, which are related to $V_0,V_1,V_2$ by
\bea
V_r(s) &=& V_0(s) -V_1(s)+V_2(s), \\
V_p(s) &=& V_0(s) +3V_1(s) + V_2(s).
\eea
It is clear that $V_a,V_r,V_p$ uniquely determine $V_0,V_1,V_2$ and hence, within the ansatz \eqref{Vredansatz}, $V_{int}$. 

\begin{figure}[htb]
\begin{center}
\includegraphics[scale=0.5]{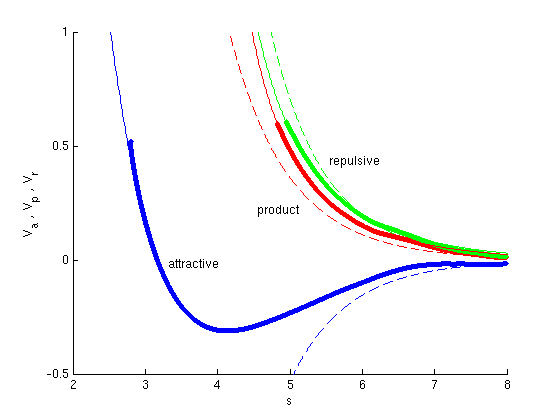}
\end{center}
\caption{Interaction energies of skyrmions pairs with separation $s$ in the attractive (blue), product (red) and repulsive (green) channels. In each case the thick curve represents numerical data extracted from a scattering process, the thin curve is a fit to this, and the dashed curve is the interaction energy predicted by the dipole model.}
\label{fig:potentials}
\end{figure}

Graphs of $V_a,V_r,V_p$, determined numerically as described above, are presented in figure \ref{fig:potentials}. These curves also show the potentials predicted by the dipole model (with dipole strength $C=14.58$). 
Clearly, the dipole formulae \eqref{dipolepot} do \emph{not} provide an accurate quantitative picture of skyrmion interactions in the lightly bound model at any separation where the interactions are not negligible. This is, perhaps, not surprising, since the dipole formalism replaces the full field theory by terms originating only in the quadratic and pion mass potential terms of the lagrangian, and these are precisely the terms which are given very low weighting, $1-\alpha$, in the lightly bound regime. The qualitative predictions of the dipole picture are reliable however: the interaction potentials appear to decay exponentially fast, and the three channels identified have the behaviour predicted (attractive, repulsive, more weakly repulsive). 
For later use, it is convenient to have explicit functions which approximate the numerical data for $V_a,V_r,V_p$. For our purposes, it is important that these functions decay exponentially with $s$ and accurately fit the numerical data for $s\geq s_0$,
where $s_0$ is somewhat smaller then the equilibrium separation defined by $V_a$ (that is, the separation at which $V_a$
is minimal). The behaviour for $s<s_0$ is not so important, provided the formulae introduce a repulsive core interaction, and is, in any case, inaccessible to our numerical scheme (since close approach of lightly bound skyrmions is forbidden in low energy scattering processes). Figure \ref{fig:potentials} also depicts the following fit functions
\begin{equation}
\label{channel functions}
\begin{aligned}
V_a(s) &= \begin{cases} \frac{7.7479-4.5997s+0.8297s^2-0.0473s^3}{1-0.4751s+0.0843s^2+0.0331s^3-0.0049s^4} & 0\leq s<7.096 \\ -94.6178\frac{e^{-s}}{s} & s\geq 7.096, \end{cases} \\
V_r(s) &= \left(\frac{2476}{s}-\frac{20322}{s^2}+\frac{50254}{s^3}\right)e^{-s},\\
V_p(s) &= \left(\frac{2126}{s}-\frac{18325}{s^2}+\frac{47298}{s^3}\right)e^{-s}.\\
\end{aligned}
\end{equation}
Of these, the most elaborate is $V_a$, a Pad\'e approximant on $[0,7.096]$ spliced to an exponentially decaying tail, the splice being chosen so that $V_a$ is continuously differentiable. Unlike $V_r$ and $V_p$, $V_a$ is well defined at $s=0$, where it is chosen to equal the static energy of the axially symmetric $B=2$ solution (a saddle point of the Skyrme energy), obtained numerically by a different scheme, a choice made mainly for aesthetic reasons. 

From now on, we choose $V_{red}$ to be the function defined by \eqref{Vredansatz}, where
\begin{align*}
V_0(s) &= \frac12 V_a(s)+\frac14V_p(s)+\frac14V_r(s) \\
V_1(s) &= \frac14 V_p(s)-\frac14 V_r(s) \\
V_2(s) &= -\frac12 V_a(s) + \frac12 V_r(s)
\end{align*}
and $V_a,V_p,V_r$ are the functions defined in \eqref{channel functions}. It is straightforward to show that this function $V_{red}$ is bounded below as, on physical grounds, it should be.

\subsection{The FCC lattice}

We have seen that the interaction potential $V_{int}$ prefers particles to be in the attractive channel, i.e. such that their relative orientation corresponds to a rotation about an axis perpendicular to their line of separation through angle $\pi$.  It is therefore desirable to find a way to pack them together such that all neighbouring pairs of particles are in the attractive channel.  The face-centred-cubic (FCC) lattice provides a solution to this problem.

The face-centred cubic lattice may be defined to be
\[ \{ (n_1\lambda,n_2\lambda,n_3\lambda)\::\:\mathbf{n}\in\ZZ^3,\,n_1+n_2+n_3 = 0\mod 2 \}, \]
with $\lambda>0$ defining a lattice scale.  The underlying cubic lattice is given by points $(n_1\lambda,n_2\lambda,n_3\lambda)$ for which $n_1,n_2,n_3$ are all even.  Those points for which some of the coordinates $n_i$ are odd lie on faces of the underlying cubic cells.

We assign orientations to these points as follows: those points on the vertices have orientation $1\in\mathrm{SU}(2)$, those on faces perpendicular to the $x$-axis have orientation $\mathbf{i}$, those on faces perpendicular to the $y$-axis have orientation $\mathbf{j}$, and those perpendicular to the $z$-axis have orientation $\mathbf{k}$.  Here we have implicitly identified elements $q\in\mathrm{SU}(2)$ with unit quaternions $q\in\mathbb{H}$, such that $\mathbf{i}=-\ii\sigma_1$, $\mathbf{j}=-\ii\sigma_2$, $\mathbf{k}=-\ii\sigma_3$ and $1$ is the identity matrix.  Put differently, the orientation $q$ of a particle at lattice site $(n_1,n_2,n_3)\lambda$ is such that
\[
R(q) = \begin{pmatrix} (-1)^{n_1}&0&0\\0&(-1)^{n_2}&0\\0&0&(-1)^{n_3}\end{pmatrix}.
\]
The reader may verify that any pair of nearest neighbours, separated by a distance $\lambda\sqrt{2}$, is in the attractive channel.

One might expect that minimizers of the potential energy derived from \eqref{point particle lagrangian} resemble subsets of the FCC lattice.  This was certainly true of all global minima of the Skyrme energy identified in \cite{gilharspe}, and all but one of the local minima.

\subsection{Inertia tensors}

The point particle model \eqref{point particle lagrangian} makes simple predictions for the inertia tensors of lightly bound skyrmions.  These are obtained by calculating the kinetic energy of a rotating and isorotating oriented point cloud.

Let $\{(\mathbf{x}_a,q_a)\}$ be a minimizer of the potential energy derived from \eqref{point particle lagrangian}.  
Choose any pair of angular velocities $(\omega_I,\omega_J)\in\su(2)\oplus\su(2)$. It is useful to identify each $\omega\in \su(2)$ with a vector $\omvec\in\R^3$ by choosing
$-\sfrac\ii2\sigma_j$, $j=1,2,3$, as a basis for $\su(2)$ (so $\omega=-\sfrac\ii2 \omvec\cdot\sigvec$). 
Consider the following configuration, which is isorotating and rotating at constant angular velocity $\omega=(\omega_I,\omega_J)$:
\begin{align*}
(\mathbf{x}_a(t),q_a(t)) &= \exp(\omega t)\cdot (\mathbf{x}_a,q_a) \\
&= \left( R(\exp(\omega_J t))\mathbf{x}_a,\, \exp(\omega_I t)q_a\exp(-\omega_J t) \right).
\end{align*}
We find
\begin{align*}
\dot{\xvec}_a(0) &= \omvec_J\times\xvec_a,\\
\dot{q}_a(0) &= \omega_Iq_a-q_a\omega_J=(\omega_I-q_a\omega_Jq_a^{-1})q_a, 
\end{align*}
whence
\begin{align*}
|\dot{\xvec}_a(0)|^2 &= |\omvec_J|^2|\xvec_a|^2-(\omvec_J\cdot\xvec_a)^2,\\
|\dot{q}_a(0)|^2 &= \frac12\Tr[(\omega_I-q_a\omega_Jq_a^{-1})q_a\, q_a^\dagger(\omega_I^\dagger-q_a\omega_J^\dagger q_a^{-1})]
=|\omvec_I|^2-2\omvec_I\cdot R(q_a)\omvec_J+|\omvec_J|^2. 
\end{align*}
Therefore the kinetic energy is
\beq
\frac12\sum_{a=1}^B \left( M|\dot{\mathbf{x}}_a|^2+L|\dot{q}_a|^2 \right) = \left(\begin{array}{cc} \omvec_I & \omvec_J \end{array}\right)
\Lambda\left(\begin{array}{c} \omvec_I \\ \omvec_J \end{array}\right),
\eeq
where the inertia tensor is
\begin{equation}
\label{point particle inertia tensor}
\Lambda = \sum_{a=1}^B \left( M \left(\begin{array}{c|c} 0_3 & 0_3 \\ \hline 0_3 & |\mathbf{x}_a|^2\mathrm{Id}_3-\mathbf{x}_a\mathbf{x}_a^T \end{array}\right)
+ L \left(\begin{array}{c|c} \mathrm{Id}_3 & -R(q_a) \\ \hline -R(q_a)^T & \mathrm{Id}_3 \end{array}\right) \right).
\end{equation}
The point particle model predicts that this is a good approximation to the inertia tensor of a lightly bound 
degree $B$ skyrmion. We will test this prediction in the next section.

\section{Energy minimizers in the point particle model}
\label{sec:4}\label{sec:numres}
\news

\subsection{Light nuclei}

% Annealing scheme and results
Having introduced the point particle model for lightly bound skyrmions, in this section we present our results for energy-minimizing configurations of point particles.  We begin by discussing our results for eight particles or fewer, where comparison can be made with energy minima in the lightly bound Skyrme model found in \cite{gilharspe}.

We have developed an iterative zero-temperature annealing algorithm to minimize the energy of a configuration of particles.  We applied this algorithm both to randomly-chosen initial ensembles of particles and to initial ensembles that are subsets of the FCC lattice.  We ran a large number of simulations for each value of $B$, typically obtaining several local energy minima, and record here only the lowest local minimum and up to two closest competitors.  Energies of these local minima with $2\leq B\leq 8$ are presented in table \ref{tab:1 to 8}. The particle ensembles themselves are depicted in figure \ref{fig:ball_stick_3-8}.  The corresponding binding energies in the lightly bound Skyrme model are also recorded in the table.  These are defined to be the energy of the $B$-skyrmion minus $B$ times the energy of the 1-skyrmion.

\begin{table}[htb]
\begin{center}
\begin{tabular}{c|c|c|c}
Name & bonds & particle energy & Skyrme interaction energy \\
\hline
2a & 2 & -0.310 & -0.36\\
3a & 3 & -0.931 & -0.92\\
4a & 6 & -1.862 & -1.71\\
5a & 8 & -2.338 & -2.20\\
5b & 8 & -2.185 & -2.00*\\
6a & 12 & -3.229 & -2.85\\
6b & 11 & -3.117 & -2.87\\
6c & 11 & -3.046 & -2.79*\\
7a & 15 & -4.057 & -3.58*\\
7b & 14 & -3.895 & -3.52\\
8a & 18 & -4.889 & -4.47\\
8b & 18 & -4.869 & -4.37\\
8c & 18 & -4.781 & -4.34*\\
\end{tabular}
\end{center}
\caption{Energies and numbers of bonds of the lowest-energy local minima in the point particle model, and energies of the corresponding lightly bound skyrmions (taken from \cite{gilharspe}, except those marked ${}^*$, which result from new simulations conducted with the same numerical scheme).}
\label{tab:1 to 8}
\end{table}

\begin{figure}[htb]
\begin{center}
\begin{tabular}{cccc}
\includegraphics[scale=0.19]{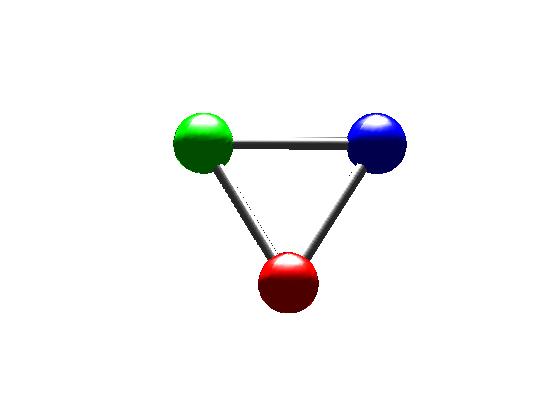}&
\includegraphics[scale=0.19]{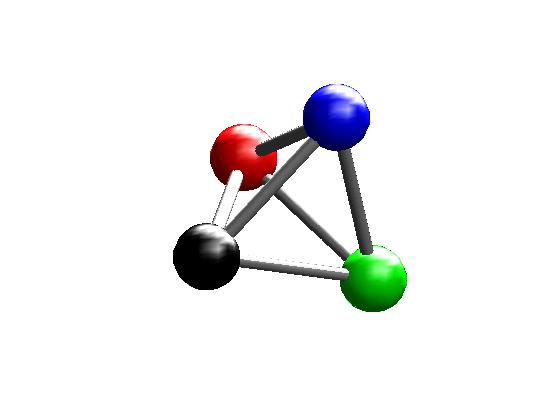}&
\includegraphics[scale=0.19]{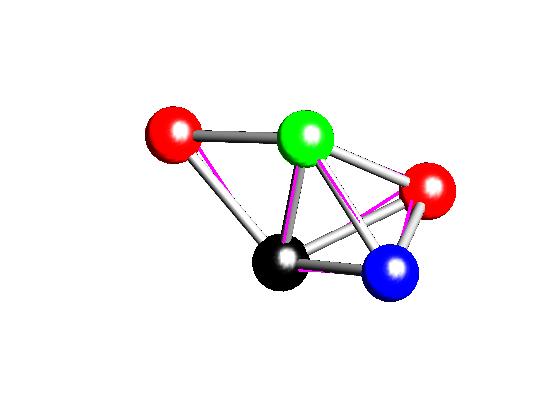}&
\includegraphics[scale=0.19]{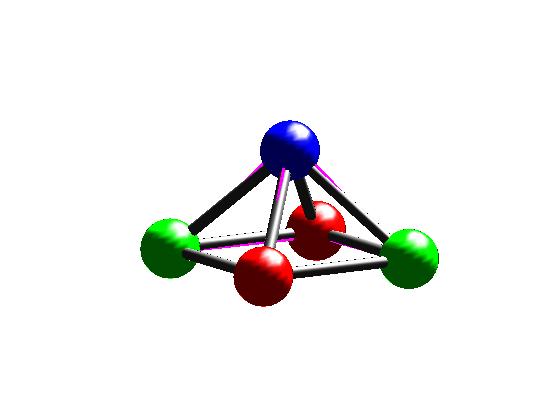}\\
3&4&5(a)&5(b)\\
\includegraphics[scale=0.19]{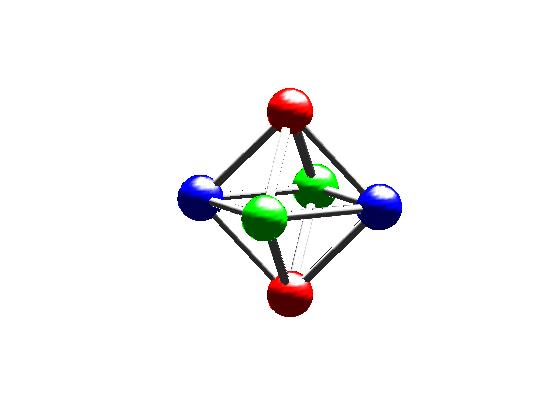}&
\includegraphics[scale=0.19]{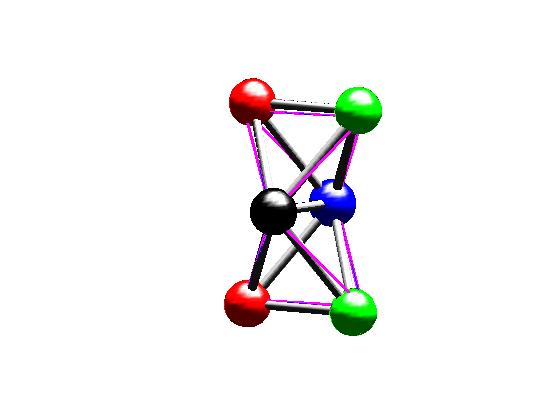}&
\includegraphics[scale=0.19]{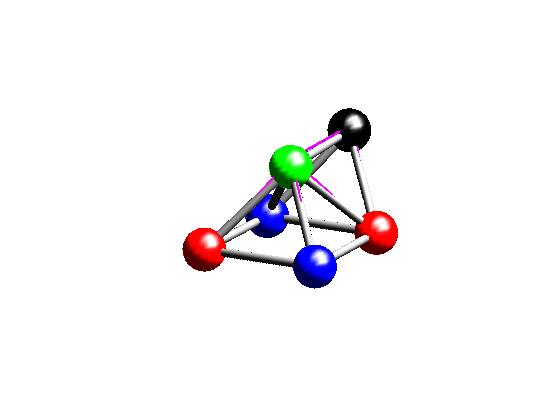}&
\includegraphics[scale=0.19]{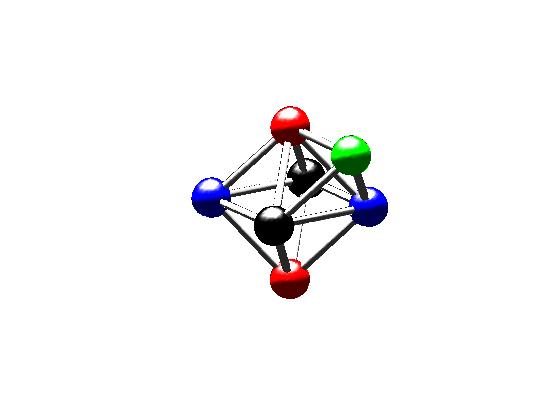}\\
6(a)&6(b)&6(c)&7(a)\\
\includegraphics[scale=0.19]{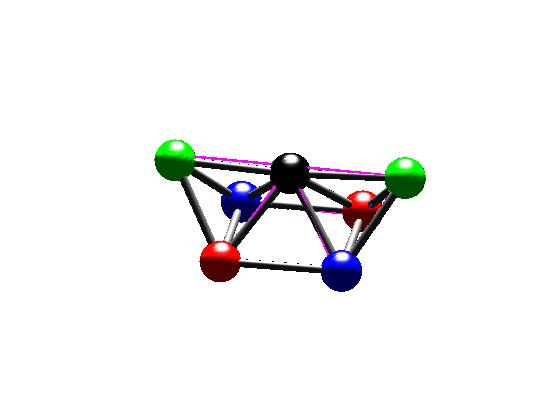}&
\includegraphics[scale=0.19]{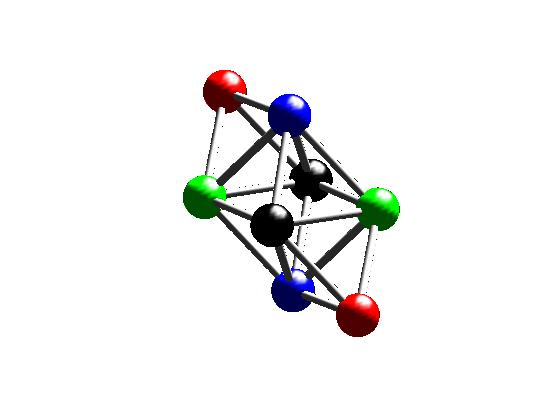}&
\includegraphics[scale=0.19]{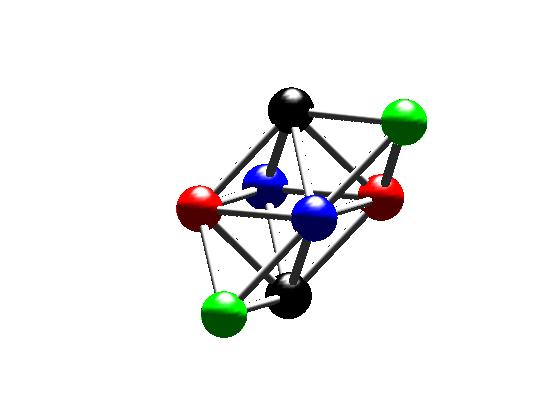}&
\includegraphics[scale=0.2]{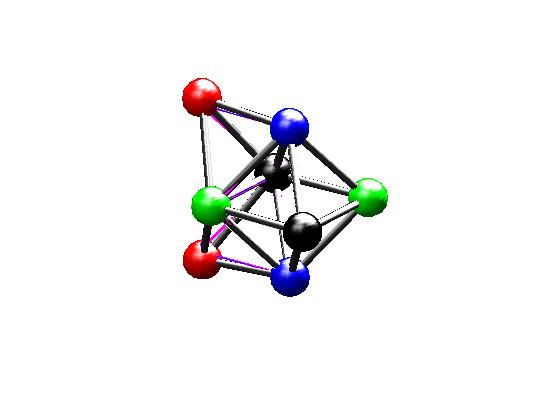}\\
7(b)&8(a)&8(b)&8(c)\\
\end{tabular}
\end{center}
\caption{Local energies minimizers in the point particle model. Each ball is centred on a point skyrmion position $\xvec_a$ and its colour represents the
internal orientation $q_a$. Each picture also depicts the FCC lattice configuration of size $B$ to which the minimizer best fits.
Thick grey line segments indicate interskyrmion bonds no more than 10\% longer than the FCC bond length, while thin magenta line segments show nearest neighbour bonds in the
best fit lattice configuration. In most cases the fit is so good that the thin bonds are not visible. They show quite clearly on 5(a), 5(b), 6(b), 6(c), 7(b) and 8(c) however.}
\label{fig:ball_stick_3-8}
\end{figure}

Our results are almost entirely consistent with the results obtained for the lightly bound Skyrme model in \cite{gilharspe}.  For 
$1\leq B\leq 5$  we obtained the same global minima as in the lightly bound Skyrme model.  For $B=6,7,8$ multiple local minima were previously obtained in the lightly bound Skyrme model.  All of these occured as local minima in the point particle model.  For $B=7,8$ the ordering of energies in the point particle also agreed with the ordering of energies in the lightly bound Skyrme model.  The only failure of the point particle model is for 6 particles: here the energies of the two lowest-energy local minima appear in the wrong order.

In addition to reproducing previously-known minimizers from the lightly bound Skyrme model our point particle model also predicted some new local minima.  Most interestingly, the global energy-minimizer in the point particle model for $B=7$, labelled $7a$ in
figure \ref{fig:ball_stick_3-8}, did not correspond to any solution of the lightly bound Skyrme model found in \cite{gilharspe}.  Based on this discovery, we constructed an approximate Skyrme field with a similar shape to the point particle energy-minimizer, and minimized its energy using the same numerical scheme that was used in \cite{gilharspe}.  After relaxation this Skyrme field had a lower energy than any of the configurations discovered in \cite{gilharspe}, as predicted by the point particle model.  Thus we have a new candidate global energy minimizer at charge seven.
Similarly, new simulations find local energy minimizers in the lightly bound model of similar shape to $5b$, $6c$ and $8c$, and these have energies ordered exactly as the point particle model predicts (so $E_{8c}>E_{8b}>E_{8a}$, for example). 

In every case, the minimizers found look, to the naked eye, like subsets of the FCC lattice.\footnote{All minimizers in this paper can be found at
{\tt http://www1.maths.leeds.ac.uk/$\sim$pmtdgh/lightlybound}}
It is an interesting problem to measure this property quantitatively. Given an oriented point cloud
$(X,Q)=(\xvec_1,\ldots,\xvec_B,q_1,\ldots,q_B)$, we wish to identify the FCC subset of size $B$ which best approximates it. To do this, 
we consider the orbit of $(X,Q)$ under the
group $S$ of \emph{similitudes} of $\R^3$,
\beq
\R^3\times(0,\infty)\times \SU(2)\ni (\cvec,\lambda,h):\xvec\mapsto R(h)\frac{(\xvec-\cvec)}{\lambda}.
\eeq
For each $s\in S$, we define $d^2$ to be the squared distance from $s\cdot X$ to the FCC lattice, i.e.
\beq\label{dsq}
d(s)^2=\sum_{a=1}^B \min\{|\xvec_a-\nvec|^2\: :\: \nvec\in\Z^3,\: n_1+n_2+n_3=0\, \mod 2\}.
\eeq
Now, given a neighbouring triple of particles in $X$ (a particle $\xvec$, its nearest neighbour $\xvec'$ and next-nearest neighbour
$\xvec''$), we construct a
similitude $s_0$ which maps $\xvec$ to $0$, $\xvec'$ to $(1,1,0)$ and $\xvec''$ to the plane spanned by $(1,1,0)$, $(0,1,1)$. We then solve the gradient flow equation of $d^2:S\ra\R$, with $s(0)=s_0$, to find a local minimum of $d^2$ close to $s_0$. Repeating over all neighbouring triples, we keep the lowest local minimum $s_{min}$ of $d^2$ found
 (note that $d^2$ never has a global minimum since $s\cdot X$ can be made arbitrarily close to $(0,0,0)$ by taking $\lambda$ sufficiently large). In this way we identify the closest FCC subset to $X$ and its root mean square distance from $X$, namely
$d_{RMS}=\sqrt{d(s_{min})^2/B}$. Having found $s_{min}\cdot X$, the FCC colouring rule predicts the internal orientations
$(q_1',\ldots,q_B')$ the particles should have. These should be compared with 
$(q_1h_{min}^{-1},\ldots,q_B h_{min}^{-1})$, bearing in mind that
orientations are defined only up to sign, and that the system is isospin invariant. Thus we minimize
\beq\label{dsqiso}
d_{iso}^2:\SU(2)\ra\R,\qquad g\mapsto \sum_{a=1}^B \min\{|gq_ah_{min}^{-1}- q_a'|^2 , |gq_ah_{min}^{-1}+ q_a'|^2\}
\eeq
over $g\in \SU(2)_I$, again by gradient flow. This gives us a measure of the root mean squared distance of the internal orientations of
the configuration $(X,Q)$ from those imposed by the colouring rule applied to its closest FCC approximant, namely $d_{RMS}^{iso}=\sqrt{d_{iso}^2(g_{min})/B}$. It also allows us to ``coarse grain'' the internal orientations, that is, map each $q_a$ to the element of $\{\pm1,\pm\ivec,\pm\jvec,\pm\kvec\}$ to which $gq_ah_{min}^{-1}$ is closest. We used this method to determine the particle colours and FCC bonds in
figure \ref{fig:ball_stick_3-8}. We will present graphs of $d_{RMS}$ and $d_{RMS}^{iso}$ in the next section.

% Inertia tensors
In addition to comparing energies we have also compared inertia tensors in the point particle and lightly bound Skyrme models.  Under isorotations and rotations $(g,h)\in\mathrm{SU}(2)_I\times\mathrm{SU}(2)_J$ inertia tensors transform as
\[
\Lambda \mapsto \left(\begin{array}{c|c} R(g) & 0 \\ \hline 0 & R(h) \end{array}\right) \Lambda \left(\begin{array}{c|c} R(g)^{-1} & 0 \\ \hline 0 & R(h)^{-1} \end{array}\right).
\]
In comparing the inertia tensors of a charge $B$ skyrmion, obtained by solving the field theory, and a charge $B$ point particle energy minimizer, we must account for the fact that the orientations of these two objects are completely unrelated. We do this
by introducing a standard form for inertia tensors which fixes these symmetries.  We say that an inertia tensor $\Lambda$ is in \emph{standard form} if
\begin{equation}
\Lambda = \left(\begin{array}{ccc|ccc}
\ast & \ast & \ast & \mu_1 & \nu_3 & \nu_2 \\
\ast & \ast & \ast & 0 & \mu_2 & \nu_1 \\
\ast & \ast & \ast & 0 & 0 & \mu_3 \\
\hline
\mu_1 & 0 & 0 & \lambda_1 & 0 & 0 \\
\nu_3 & \mu_2 & 0 & 0 & \lambda_2 & 0 \\
\nu_2 & \nu_1 & \mu_3 & 0 & 0 & \lambda_3
\end{array}\right),
\end{equation}
where
\begin{itemize}
\item $\lambda_1,\lambda_2,\lambda_3$ satisfy $|\lambda_1-\lambda_2|\leq|\lambda_2-\lambda_3|\leq|\lambda_1-\lambda_3|$;
\item if $\lambda_1\neq\lambda_2$ then $\mu_1,\mu_2,\mu_3$ are either all non-negative or all non-positive and $\nu_1,\nu_2,\nu_3$ are either all non-negative or all non-positive; and
\item if $\lambda_1=\lambda_2$ then $\nu_3=0$, $|\mu_1|>|\mu_2|$, and $\mu_1,\mu_2,\mu_3,\nu_1,\nu_2$ are either all non-negative or all non-positive.
\end{itemize}

Any inertia tensor has a matrix of standard form in its $\SU(2)_I\times \SU(2)_J$ orbit, and, in generic cases 
this matrix is unique. 
Note that we have chosen not to define standard form as being a form in which both the upper-left and lower-right blocks of $\Lambda$ are diagonal, even though such a form is arguably simpler than the one described above.  The reason is that the upper-left block of any inertia tensor obtained in the point particle model is proportional to the identity, so diagonalising the upper left block does not fix the isorotation symmetry.
We shall measure the distance between inertia tensors by the distance between their standard forms, using the usual Euclidean norm on the space of real matrices, that is
\beq\label{normdef}
\|\Lambda\|^2:=\Tr(\Lambda^T\Lambda).
\eeq

In table \ref{tab:inertia} the distances between inertia tensors obtained in the point particle and lightly bound Skyrme models are recorded.  
The errors recorded in the table are normalised by dividing through by $\|\Lambda\|$, where $\Lambda$ is the lightly bound Skyrme model inertia tensor.  The configurations chosen in this comparison correspond to global energy minima in the lightly bound Skyrme model.  The values of $L$ and $M$ have been chosen to optimise the agreement between the two models, in other words, to minimize the sum over all chosen configurations of the distance between the lightly bound Skyrme and point particle inertia tensors.  The precise values are
\[ M = 93.09,\quad L=54.30. \]
These are quite close to the values $M_H\approx87.49$ and $L_H\approx 53.49$ obtained directly from the 1-skyrmion \eqref{hedgehog ansatz}.  As with energies, agreement of inertia tensors is generally good (within 6\%), with one exception at baryon number 6.
\begin{table}[htb]
\begin{center}
\begin{tabular}{c|cccccccc}
name & 1 & 2 & 3 & 4 & 5 & 6b & 7a & 8a \\
\hline
error & 1.96\% & 5.96\% & 1.61\% & 1.15\% & 4.65\% & 9.91\% & 1.97\% & 3.20\%
\end{tabular}
\end{center}
\caption{Percentage error in inertia tensors calculated in the point particle model, as compared with the lightly bound Skyrme model}
\label{tab:inertia}
\end{table}

\subsection{Heavier nuclei}

% Two simplifications: restriction to crystal i.c.'s with high bond number, labelling crystals by energy
When searching for local energy minima with large numbers of particles, one faces the problem that the number of connected subsets of the FCC lattice grows rapidly with the number of particles, and hence the number of candidate local minima of the energy grows rapidly.  We addressed this problem by seeking only local minima corresponding to FCC lattice subsets with a large number of bonds.  More precisely, we used as initial conditions in our relaxation algorithm only lattice subsets whose number of bonds is at most two less than the maximum possible for the given number of particles.  In the end we found that global energy minima always had at most one less than the maximum number of bonds, so this restriction seems reasonable.

Even with this simplification, the number of initial conditions to consider is large and it is difficult to be sure that enough simulations have been run to find the global energy minimizer.  To solve this problem we separated our minimization algorithm into two stages: in the first stage, a list of distinct lattice subsets is generated, and in the second stage these subsets are relaxed as before.  Our method for telling whether two lattice subsets are distinct is to compute their energy: if two lattice subsets have the same energy to high precision we assume that they are identical and discard one.  In doing so we run the risk that a lattice subset whose energy happens to coincide with another is wrongfully discarded.  For example, the initial FCC subsets used to generate solutions 6b and 6c have exactly the same spectrum of bonds, and hence exactly the same energy. Only after relaxation away from the FCC lattice do their energies separate. To mitigate against this danger we ran extensive simulations up to 16 particles starting from randomly chosen lattice subsets satisfying the bond number constraint; in all cases we obtained the same minimum energy as when we started with a list of distinct lattice subsets.

One distinct advantage of our method is that it makes it easy to identify not just the global energy-minimizer but also local energy minima.  Another is that it allows one to tell with reasonable confidence when sufficiently many lattice subsets have been sampled.  Throughout the procedure the number of occurences of each subset is recorded, and when all of these numbers are above a fixed minimum one may assume that all distinct lattice subsets have been found and terminate the algorithm.

% Crystal growth algorithm
In order to generate lattice subsets to use as initial conditions we developed a crystal-growing algorithm.  Again, this algorithm proceeds in two stages.  In the first stage a connected subset of the FCC lattice is generated iteratively.  This scheme starts with a lattice subset consisting of a single point.  At each step of the iteration a member of the lattice subset is chosen at random and one of its twelve nearest neighbours is chosen at random.  If the neighbour is not already a member of the lattice subset, it is appended, otherwise it is discarded.  This continues until the lattice subset has the required number of particles.  In the second stage of the algorithm the subset is modified so as to increase the number of bonds while maintaining a fixed number of particles.  At each step the algorithm chooses at random one member of the subset and a neighbour of another member.  If the neighbour is not already a member of the subset, and replacing the original member with this neighbour increases the number of bonds, the algorithm makes this replacement; otherwise, nothing happens.  This continues for a fixed number of steps.  At each step of the second stage the lattice subset is recorded, so running the algorithm once generates a large number of lattice subsets.

The crystal-growing algorithm was run repeatedly and distinct subsets with sufficiently many bonds saved until it was deemed that enough lattice subsets had been sampled, according to the above-defined criteria.  The maximum number of bonds and the number of crystals identified satisfying our criteria are recorded in table \ref{tab:crystals}.
\begin{table}[htb]
\begin{center}
\begin{tabular}{c|c|c}
particle & maximum number & number of lattice \\
number & of bonds found & subsets identified \\
\hline
9 & 21 & 46 \\
10 & 25 & 34 \\
11 & 28 & 102 \\
12 & 32 & 84 \\
13 & 36 & 69 \\
14 & 40 & 56 \\
15 & 44 & 53 \\
16 & 48 & 51 \\
17 & 52 & 55 \\
18 & 56 & 66 \\
19 & 60 & 88 \\
20 & 64 & 125 \\
21 & 68 & 151 \\
22 & 72 & 221 \\
23 & 76 & 342 \\
\end{tabular}
\end{center}
\caption{Maximum number of bonds in an FCC lattice subset of given size, and the number of lattice subsets identified by our algorithm with at most two fewer bonds than the maximum.}
\label{tab:crystals}
\end{table}

% List of minimizers.  Configurations with cubic & tetrahedral symmetry
The output of our algorithm is recorded in table \ref{tab:energies}.  The total number of local energy minima found was huge; in this table we list all local minima whose energy is within 0.1 of the lowest energy found, together with configurations 5b, 6b and 6c.  For the most part, energy minimizers have the maximum number of bonds possible (exceptions in the table are marked by asterisks), and have the most even distribution of particle ``colours" (after coarse graining) possible. A notable exception to both these rules is 23a, which has one less bond than maximal, and a rather uneven colour distribution (8,5,5,5) but is, nonetheless, the lowest energy $B=23$ configuration found. This minimizer also has unusually high symmetry, as can be seen from 
figure \ref{fig:aren't_I_special}, which also depicts the highly symmetric minimizers 10b and 19b.
One should note, however, that the point particle model does not always favour highly symmetric configurations. The $B=13$ configuration,
let us call it 13sym, obtained by augmenting a single point by all its nearest neighbours, for example, has the maximal number of bonds, but has energy $-8.556$, which is much higher than the 13a. It also has a very uneven colour distribution: 4,4,4,1. So for $B=13$, unlike $B=23$, the model prefers to sacrifice symmetry in favour of uniform colour distribution. These two charge 13 configurations are also depicted in figure \ref{fig:aren't_I_special}. Note that all particles in 13a are contained in just two planes of the FCC lattice, a feature it has in common with all global minimizers for $4\leq B\leq 15$.

% Classical BEPN curve
% Moments
\begin{table}
\begin{center}
\begin{tiny}
\begin{tabular}{c|c|c|c|c|c|c|c||c|}
     &       & Colour & Classical & Symmetry &     &     & Quantum & {Experiment} \\
Name & Bonds & count  & energy    & group    & $I$ & $J$ & energy  &  \\
\hline
2a & 1 & 1,1,0,0 & -0.310 & $D_2$ & 0 & 1 & 3.813 & ${}^2\mathrm{H}_1$  \\
3a & 3 & 1,1,1,0 & -0.931 & $C_3$ & 1/2 & 1/2 & 1.106 & ${}^{3}\mathrm{He}_{2}$ \\
4a & 6 & 1,1,1,1 & -1.862 & $T$ & 0 & 0 & -1.862 & ${}^{4}\mathrm{He}_{2}$ \\
5a & 8 & 2,1,1,1 & -2.338 & 1 & 1/2 & 1/2 & -1.167 &  \\
5b & 8 & 2,2,1,0 & -2.185 & $C_4$ & 1/2 & 3/2 & -0.700 & ${}^{5}\mathrm{He}_{2}$  \\
6a & 12 & 2,2,2,0 & -3.229 & $O$ & 2 & 1 & 4.275 & \\
6b & 11* & 2,2,1,1 & -3.117 & $D_2$ & 0 & 1 & -2.973 & ${}^{6}\mathrm{Li}_{3}$ \\
6c & 11* & 2,2,1,1 & -3.046 & 1 & 0 & 0 & -3.046 & \\
7a & 15 & 2,2,2,1 & -4.057 & $C_3$ & 1/2 & 1/2 & -3.210 &  \\
8a & 18 & 2,2,2,2 & -4.889 & $D_3$ & 0 & 0 & -4.889 & ${}^{8}\mathrm{Be}_{4}$  \\
8b & 18 & 2,2,2,2 & -4.869 & $C_2$ & 0 & 1 & -4.769 &  \\
9a & 21 & 3,2,2,2 & -5.664 & $C_3$ & 1/2 & 1/2 & -5.024 &  \\
9b & 21 & 3,2,2,2 & -5.598 & 1 & 1/2 & 1/2 & -4.956 &  \\
10a & 25 & 3,3,2,2 & -6.443 & $D_2$ & 0 & 1 & -6.352 &  \\
10b & 24* & 4,2,2,2 & -6.442 & $T$ & 0 & 0 & -6.442 &  \\
11a & 28 & 3,3,3,2 & -7.261 & 1 & 1/2 & 1/2 & -6.736 &  \\
12a & 31* & 3,3,3,3 & -8.081 & $C_2$ & 0 & 0 & -8.081 & ${}^{12}\mathrm{C}_{6}$  \\
12b & 32 & 3,3,3,3 & -8.066 & 1 & 0 & 0 & -8.066 &  \\
13a & 36 & 4,3,3,3 & -9.016 & $C_3$ & 1/2 & 1/2 & -8.575 & ${}^{13}\mathrm{C}_{6}$  \\
14a & 39* & 4,4,3,3 & -9.821 & 1 & 0 & 0 & -9.821 & \\
15a & 43* & 4,4,4,3 & -10.653 & 1 & 1/2 & 1/2 & -10.272 & ${}^{15}\mathrm{N}_{7}$  \\
15b & 42** & 4,4,4,3 & -10.627 & 1 & 1/2 & 1/2 & -10.247 & ${}^{15}\mathrm{N}_{7}$  \\
15c & 43* & 4,4,4,3 & -10.584 & 1 & 1/2 & 1/2 & -10.202 & ${}^{15}\mathrm{N}_{7}$  \\
16a & 48 & 4,4,4,4 & -11.771 & $T$ & 0 & 0 & -11.771 & ${}^{16}\mathrm{O}_{8}$  \\
17a & 51* & 5,4,4,4 & -12.563 & $C_3$ & 1/2 & 1/2 & -12.228 &  \\
18a & 54** & 5,5,4,4 & -13.356 & $C_2$ & 0 & 0 & -13.356 &  \\
18b & 56 & 6,4,4,4 & -13.340 & $C_4$ & 0 & 0 & -13.340 &  \\
19a & 60 & 5,5,5,4 & -14.251 & $C_3$ & 1/2 & 1/2 & -13.951 & ${}^{19}\mathrm{F}_{9}$  \\
19b & 60 & 7,4,4,4 & -14.244 & $O$ & 1/2 & 1/2 & -13.946 & ${}^{19}\mathrm{F}_{9}$  \\
19c & 58** & 5,5,5,4 & -14.178 & 1 & 1/2 & 1/2 & -13.879 & ${}^{19}\mathrm{F}_{9}$ \\
19d & 59* & 5,5,5,4 & -14.164 & 1 & 1/2 & 1/2 & -13.864 & ${}^{19}\mathrm{F}_{9}$  \\
20a & 64 & 5,5,5,5 & -15.194 & 1 & 0 & 0 & -15.194 & ${}^{20}\mathrm{Ne}_{10}$  \\
21a & 68 & 6,5,5,5 & -16.118 & 1 & 1/2 & 1/2 & -15.848 &  \\
22a & 72 & 7,5,5,5 & -17.022 & $C_3$ & 0 & 0 & -17.022 &  \\
23a & 75* & 8,5,5,5 & -17.813 & $C_3$ & 1/2 & 1/2 & -17.568 &  \\
23b & 76 & 6,6,6,5 & -17.778 & $C_2$ & 1/2 & 1/2 & -17.531 &  \\
23c & 75* & 6,6,6,5 & -17.755 & 1 & 1/2 & 1/2 & -17.508 &  \\
23d & 75* & 6,6,6,5 & -17.744 & 1 & 1/2 & 1/2 & -17.498 &  \\
23e & 75* & 6,6,6,5 & -17.724 & 1 & 1/2 & 1/2 & -17.478 &  \\
\end{tabular}

\end{tiny}
\end{center}
\caption{The lowest energy local energy minima in the point particle model.  Asterisks in column 2 indicate that the configuration has one or two fewer bonds than the maximum bond number found for that particle number. Column 3 indicates the number of particles of each internal orientation, after coarse-graining. The classical energy is the potential $V$ of eq.\ \eqref{point particle potential}, and the quantum energy is $V+\Delta$, where $\Delta$ is defined in section \ref{sec:5}.  The experiment column 
identifies the lightest nucleus for given baryon number $B$ if this nucleus has the spin and isospin predicted.}
\label{tab:energies}
\end{table}

\begin{figure}[htb]
\begin{center}
\begin{tabular}{ccc}
\includegraphics[scale=0.19]{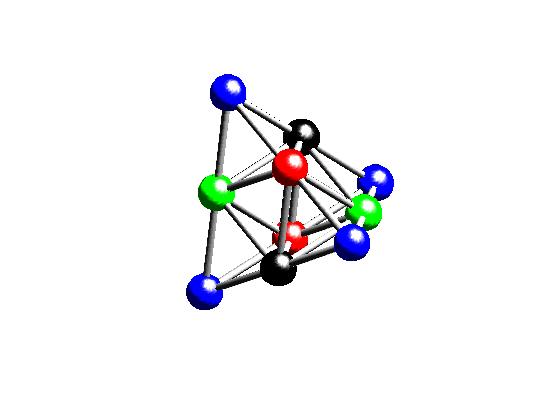}&
\includegraphics[scale=0.19]{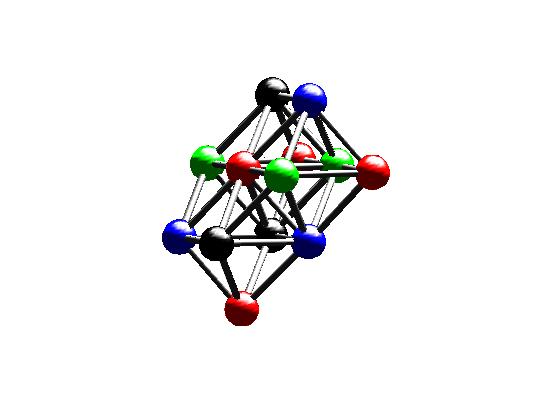}&
\includegraphics[scale=0.19]{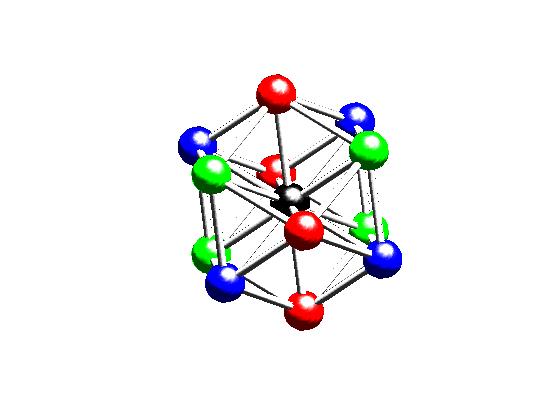}\\
10b&13a&13sym\\
\end{tabular}
\begin{tabular}{cc}
\includegraphics[scale=0.19]{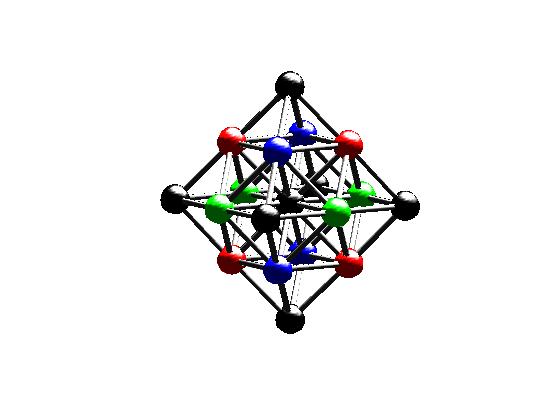}&
\includegraphics[scale=0.19]{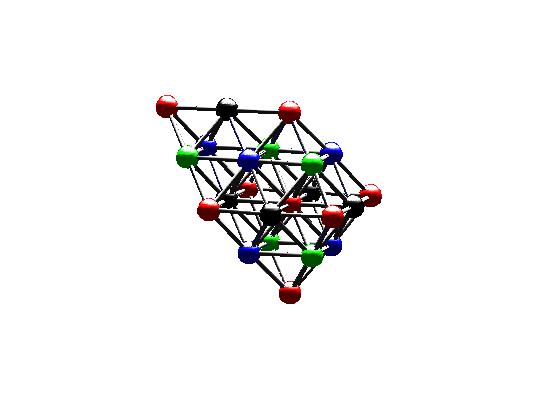}
\\
19b & 23a \\
\end{tabular}
\end{center}
\caption{Selected energy minimizers for $10\leq B\leq 23$. 
23a is a large octohedron with a tetrahedron glued to one face. This unusually symmetric configuration is the global minimizer for $B=23$, despite having less than maximal bond number and rather uneven colour distribution. By contrast, the exceptionally symmetric configuration 13sym has much higher energy than 13a. Also depicted are the local minimizers 10b and 19b, a large tetrahedron and octohedron respectively.}
\label{fig:aren't_I_special}
\end{figure}

The corresponding predictions for nuclear binding energies per nucleon, defined to be $-V_{int}/B$, are plotted in figure \ref{fig:classical BE}.  Here, as in \cite{gilharspe}, energies in table \ref{tab:energies} have been converted to MeV by multiplying with 10.72.  The curve shows that ensembles of 4 and 16 particles have unusually high binding energies, in agreement with nuclear experiment, although these effects are less pronounced in the point particle model than in experiment.  The energy minimizers corresponding to these two peaks are particularly special: they both have tetrahedral symmetry.  Note that our binding energy curve lacks the peak seen at baryon number 12 in the nuclear binding energy curve.

\begin{figure}[htb]
\begin{center}
\includegraphics[scale=1.0]{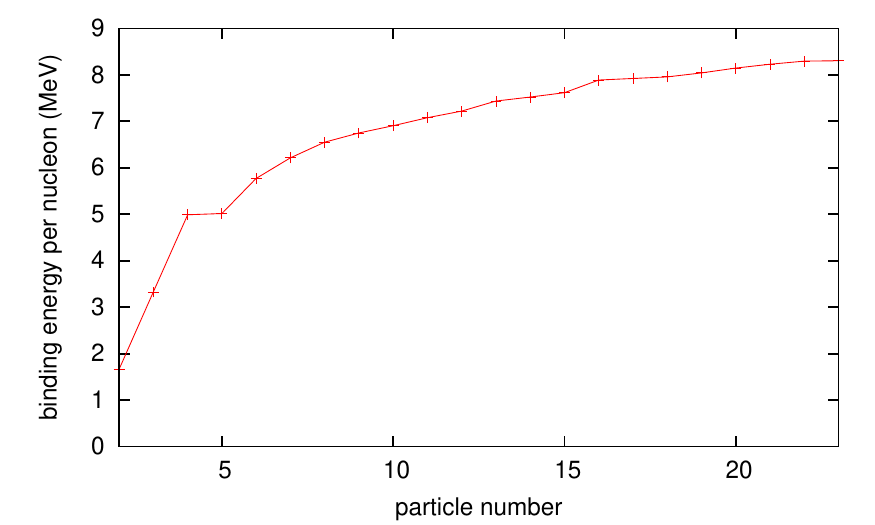}
\end{center}
\caption{Predictions for nuclear binding energies from the point particle model.}
\label{fig:classical BE}
\end{figure}

In order to analyze the overall shape of the energy minimizers we have calculated for each its second moment matrix $M_{ij}$, defined by
\[ \label{secmommat}
M_{ij} := \sum_{a=1}^N (x^i_a-x^i_0)(x^j_a-x^j_0),\quad \mathbf{x}_0:=\frac{1}{N}\sum_{a=1}^N \mathbf{x}_a. \]
This matrix can be decomposed $M=M_1+M_0$, where $M_1=\frac13\Tr(M)\mathrm{Id}_3$ and $M_0$ is traceless.  The trace part $M_1$ provides a measure of the size of the point cloud $\{\xvec_a\}$.  If the cloud is approximately round, then $M$ has nearly equal eigenvalues and the traceless part $M_0$ is close to zero. Therefore $\|M_0\|$ provides a measure of anisotropy of the point cloud (recall that the norm of a matrix is defined in \eqref{normdef}). For a symmetric matrix such as
$M$, 
$\|M\|^2=\lambda_1^2+\lambda_2^2+\lambda_3^2$, where $\lambda_i$ are its eigenvalues. Clearly, $\|M_1\|^2=3\ol{\lambda}^2$ where
$\ol\lambda=(\lambda_1+\lambda_2+\lambda_3)/3$, and the eigenvalues of $M_0$ are $\lambda_i-\ol\lambda$. Hence
\bea
\|M_0\|^2&=&(\lambda_1-\ol\lambda)^2+(\lambda_2-\ol\lambda)^2+(\lambda_3-\ol\lambda)^2\nonumber\\
&=&(\lambda_1+\lambda_2+\lambda_3)^2-2(\lambda_1\lambda_2+\lambda_2\lambda_3+\lambda_3\lambda_1)-3\ol\lambda^2\nonumber\\
&=&6\ol\lambda^2-2(\lambda_1\lambda_2+\lambda_2\lambda_3+\lambda_3\lambda_1)\label{jsg1}
\leq 6\ol\lambda^2=2\|M_1\|^2
\eea
since $M$ is positive definite.
In figure \ref{fig:moments1} we have plotted $\|M_0\|$ against $\|M_1\|$ for the minimizers listed in table \ref{tab:energies}.  Overall there seems to be a downward trend in $\|M_0\|/\|M_1\|$ as $\|M_1\|$ increases, indicating that larger minimizers are closer to being round than small minimizers.  However, even for large nuclei the level of anisotropy is substantial.

The determinant $\det(M_0)$ measures the qualitative nature of the anisotropy.  Let us order the eigenvalues of $M$ so that
$\lambda_1\leq\lambda_2\leq\lambda_3$. If the point cloud is long and thin then $\lambda_1\leq\lambda_2<\ol\lambda<\lambda_3$,
so $M_0$ has two negative eigenvalues and one positive, whence $\det(M_0)>0$. By contrast, if the point cloud is flat and round 
then $\lambda_1<\ol\lambda<\lambda_2\leq\lambda_3$, so $\det(M_0)<0$.  It is useful to define $\mu_i=\lambda_i-\ol\lambda$, the eigenvalues of $M_0$. 
By extremizing the function $\det M_0=\mu_1\mu_2\mu_3$ on the circle obtained by intersecting the sphere of radius
$\|M_0\|$ with the plane $\mu_1+\mu_2+\mu_3=0$, one finds that
\beq\label{jsg2}
 -\frac{1}{3\sqrt{6}}\|M_0\|^3 \leq \det(M_0) \leq \frac{1}{3\sqrt{6}}\|M_0\|^3, 
\eeq
with equality precisely when two of the eigenvalues (of $M_0$ or, equivalently, $M$) coincide. 
Figure \ref{fig:moments1} also displays a plot of $\sqrt[3]{\det(M_0)}$ against $\|M_0\|$ for the minimizers listed in table \ref{tab:energies}.  Interestingly, $\det(M_0)$ is close to either its maximum or its minimum value in the majority of cases, indicating that both extremes of anisotropy are well-represented.

\begin{figure}[htb]
\begin{center}
\begin{tabular}{cc}
(a)&(b)\\
\includegraphics[scale=1.0]{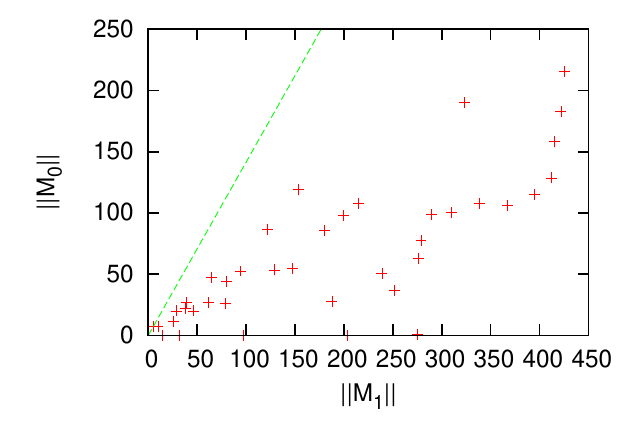}&
\includegraphics[scale=1.0]{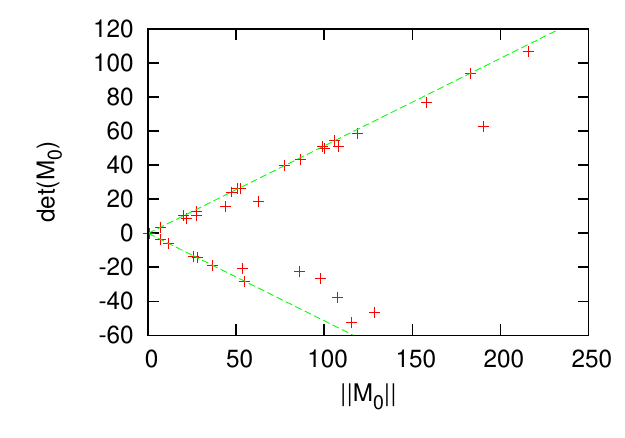}
\end{tabular}
\end{center}
\caption{Graphs showing (a) the anisotropy of energy minimizers listed in table \ref{tab:energies} as a function of their size, and
(b) the type of anisotropy of these energy minimizers.  The dashed lines represent the bounds on $\|M_0\|$ and $\det M_0$ given by
 \eqref{jsg1}, \eqref{jsg2}.}
\label{fig:moments1}
\end{figure}

Just as for $2\leq B\leq 8$, we can measure the distance of each local minimum found from its closest FCC lattice approximant, both in space and in internal (orientation) space, as defined in \eqref{dsq} and \eqref{dsqiso}. The results of this analysis are presented in figure \ref{fig:wwtp?}. With very few exceptions the minimizers match up very closely with FCC subsets, and their internal orientations are very close to the FCC prediction. Note, however, that the optimal lattice scale varies quite significantly with $B$ (rightmost graph), so it is not a good approximation to fix this at the start (to match the FCC bond length to the optimal separation of a single skyrmion pair, for example) and minimize energy only over FCC subsets of that fixed scale. Any attempt to proceed in this way always gets the relative energy ordering of local minima wrong for several values of $B$.

\begin{figure}[htb]
\begin{center}
\includegraphics[scale=0.5]{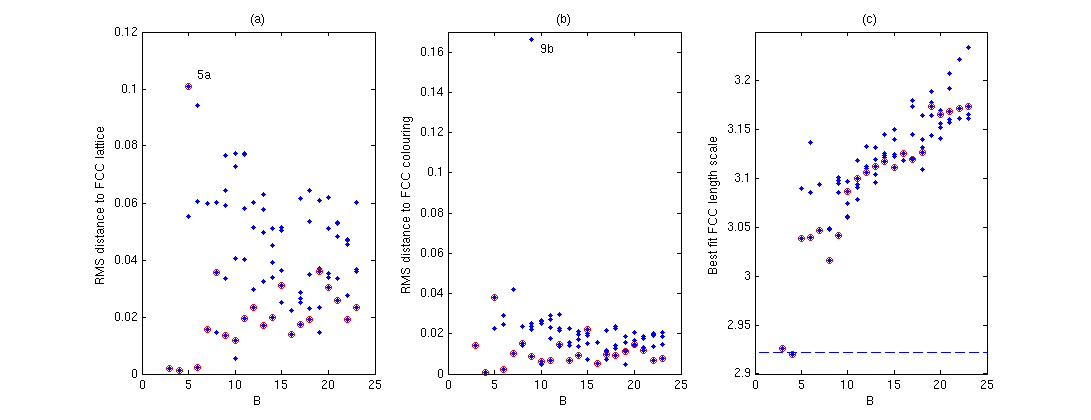}
\end{center}
\caption{Comparison of energy minimizers of the point particle model with subsets of the FCC lattice for baryon number $3\leq B\leq 23$. In each case we shift, scale and rotate the configuration until it matches, as closely as possible, a connected subset of the standard FCC lattice (with integer coordinates). Plot (a) shows the root mean square distance of the particles in each transformed configuration from the FCC lattice, while
plot (b) shows the root mean square distance of their internal orientations from those predicted by the FCC colouring rule. Plot (c) shows the scale factor used. In each case, data corresponding to global energy minima (i.e.\ configurations labelled ``a") 
are circled in red. The dashed line in (c) marks the optimal lattice scale for $B=2$.}
\label{fig:wwtp?}
\end{figure}

\section{Rigid body quantization}
\label{sec:5}\label{sec:rbq}\news

% Overview of RB quantisation
Nuclei are inherently quantum-mechanical, so to make a direct comparison between skyrmions and nuclei it is necessary to include quantum effects in the Skyrme model.  Traditionally this is done semiclassically, treating the classical skyrmions as rigidly rotating and isorotating bodies.  The quantized wavefunction is required to satisfy the Finkelstein-Rubinstein constraints.  In practice these constraints restrict the spin and isospin quantum numbers of a quantized skyrmion.  For example, they guarantee that quantized skyrmions have either half-integer or integer spin and isospin according to whether the baryon number is odd or even.  In cases where solitons have symmetry they yield more nontrivial information.  In the present section we describe how to apply rigid body quantization and the Finkelstein-Rubinstein constraints in the point particle model.  The procedure reduces to a numerical algorithm which we have implemented and applied to the minima presented in the previous section.

\subsection{Finkelstein-Rubinstein constraints}

We begin by recalling the definition of the Finkelstein-Rubinstein constraints (readers not interested in topological details could skip this subsection and continue reading at the start of the next subsection).  The classical configuration space of solitons with baryon number $B$ is the space $\mathcal{S}_B$ of continuous maps $U:\RR^3\to\mathrm{SU}(2)$ of topological degree $B$, satisfying the boundary condition $U(\mathbf{x})\to1$ as $|\mathbf{x}|\to\infty$.  This space is topologically nontrivial: it contains non-contractible loops, so has a nontrivial fundamental group. In fact, $\pi_1(\mathcal{S}_B)=\mathbb{Z}_2$ for
 all $B\in\Z$.  $\mathcal{S}_B$ has a universal covering space $\tilde{\mathcal{S}}_B$ together with a two-to-one map $\pi_{\mathcal{S}}:\tilde{\mathcal{S}}_B\to\mathcal{S}_B$, such that all loops in $\tilde{\mathcal{S}}_B$ are contractible and a loop in $\mathcal{S}_B$ is contractible if and only if it can be lifted to a closed loop in $\tilde{\mathcal{S}}_B$.  The soliton wavefunction is a function $\Psi:\tilde{\mathcal{S}}_B\to\CC$.  The Finkelstein-Rubinstein constraint on $\Psi$ states that for every pair $y$ and $y'$ of distinct points in $\tilde{\mathcal{S}}_B$ such that $\pi_\mathcal{S}(y)=\pi_\mathcal{S}(y')$,
\begin{equation}\label{FR constraint} \Psi(y) = - \Psi(y'). \end{equation}

A configuration of $B$ point particles consists of $B$ vectors $\mathbf{x}_1,\ldots\mathbf{x}_B$ in $\RR^3$ and $B$ elements $q_1,\ldots,q_B\in\mathrm{SU}(2)$.  The energy function disfavours vectors $\mathbf{x}_a$ from being too close, so for practical purposes we may demand that their separations are greater than some fixed minimum $\delta>0$.  Therefore the naive configuration space for the point particle model is
\begin{align}
\tilde{\mathcal{C}}_B &:= \mathrm{SU}(2)^B \times \RR^{3B\ast} \\
\RR^{3B\ast}&:= \big\{(\mathbf{x}_1,\ldots\mathbf{x}_B)\in(\RR^3)^B\::\:|\mathbf{x}^a-\mathbf{x}^b|>\delta \mbox{ whenever } a\neq b \big\}.
\end{align}
The map to Skyrme configuration space is given by a so-called ``relativised product ansatz''
\begin{equation}
\label{relativised product ansatz}
\hat{F}:(\mathbf{x}_1,\ldots,\xvec_B,q_1,\ldots,q_B)\mapsto P\left( \frac{1}{N!}\sum_{\sigma\in\Sigma_B} \prod_{a=1}^B U(\mathbf{x};\mathbf{x}_{\sigma(a)},q_{\sigma(a)}) \right).
\end{equation}
Here $U(\mathbf{x};\mathbf{x}_a,q_a)$ is defined in in \eqref{oriented hedgehog}, $\Sigma_B$ is the group of permutations of the set $\{1,\ldots,B\}$, and
\[ P(U) = \Tr(U^{\dagger}U)^{-1/2} U. \]
Applying $P$ to a sum of products of matrices in $\mathrm{SU}(2)$ yields an $\mathrm{SU}(2)$ matrix, as long as the sum of products is everywhere nonvanishing.  The argument of $P$ in eq.\ \eqref{relativised product ansatz} is nonvanishing if the positions $\mathbf{x}_a$ are sufficiently well-separated, so the right hand side is a map $\RR^3\to\mathrm{SU}(2)$.

The map $\hat{F}:\tilde{\mathcal{C}}_B\to\mathcal{S}_B$ in eq.\ \eqref{relativised product ansatz} is not injective since flipping the sign of any orientation $q_a$, and permuting the particle labels, leave the associated Skyrme field unchanged. To be precise, let $\Sigma_B$ be the group of permutations of $\{1,2,\ldots,B\}$ and $\Z_2=\{1,-1\}$. To each pair $(\sigma,s)\in\Sigma_B\times(\Z_2)^B$, associate the map 
$$
\Phi_{(\sigma,s)}:\tilde{\mathcal{C}}_B\ra
\tilde{\mathcal{C}}_B,\qquad (\xvec_1,\ldots,\xvec_B,q_1,\ldots,q_B)\mapsto (x_{\sigma(1)},\ldots,x_{\sigma(B)},s_{\sigma(1)}q_{\sigma(1)},\ldots,s_{\sigma(B)}q_{\sigma(B)}).
$$
This defines a {right} action of $\Sigma_B \ltimes (\Z_2)^B$ on 
$\tilde{\mathcal{C}}_B$ by homeomorphisms, where the semi-direct product carries group operation
$$
(\sigma,{s})\cdot(\mu,{t})=(\sigma\circ\mu,(s_1t_{\sigma^{-1}(1)},\ldots,s_Bt_{\sigma^{-1}(B)})).
$$
This action leaves $\hat{F}$ invariant, that is, $\hat{F}\circ\Phi_{(\sigma,s)}=\hat{F}$, so  $\hat{F}$ descends to a continuous map $F:\mathcal{C}_B\to\mathcal{S}_B$ where 
\[ \mathcal{C}_B := \tilde{\mathcal{C}}_B/\Sigma_B \ltimes (\ZZ_2)^B \]
is the true point particle configuration space. Since $\tilde{\mathcal{C}}_B$ is simply connected and the action is free, 
 $\tilde{\mathcal{C}}_B$ is the universal cover of $\mathcal{C}_B$ and $\pi_1(\mathcal{C}_B)\cong \Sigma_B\ltimes (\ZZ_2)^B$. Clearly, $\hat{F}=F\circ\pi_\mathcal{C}$ where 
$\pi_\mathcal{C}:\tilde{\mathcal{C}}_B\ra\mathcal{C}_B$ is the canonical projection. 

Choose any pair of points $x_0\in\tilde{\mathcal{C}}_B$, $y_0\in\tilde{\mathcal{S}}_B$ such that $\hat{F}(x_0)=\pi_\mathcal{S}(y_0)$. By a standard theorem of topology (see \cite[pp 61-2]{hat} for
example), $\hat{F}$ has a unique continuous lift $\tilde{F}:\tilde{\mathcal{C}}_B\ra\tilde{\mathcal{S}}_B$ with $\tilde{F}(x_0)=y_0$. The situation is summarized in the following commutative diagram
\beq
\begin{xy}
(0,20)*+{\tilde{\mathcal{C}}_B}="a"; (20,20)*+{\tilde{\mathcal{S}}_B}="b";%
(0,0)*+{\mathcal{C}_B}="c"; (20,0)*+{\mathcal{S}_B}="d";%
{\ar "a";"b"}?*!/_3mm/{\tilde{F}};
{\ar@{->}_{\pi_{\mathcal{C}}} "a";"c"};{\ar@{->}^{\pi_{\mathcal{S}}} "b";"d"};%
{\ar "c";"d"}?*!/_3mm/{{F}};%
{\ar@{->} "a";"d"};?*!/_3mm/{\hat{F}};
\end{xy}.
\eeq
Note that $\wt{F}$ is a lift of $F$.

Any wavefunction $\Psi:\tilde{\mathcal{S}}_B\to\CC$ defines a wavefunction $\psi=\Psi\circ\tilde{F}$ on $\tilde{\mathcal{C}}_B$, which must satisfy some nontrivial constraints derived from the Finkelstein-Rubinstein constraints:
\begin{proposition}
If $x,x'$ are two points in $\tilde{\mathcal{C}}_B$ such that $\pi_\mathcal{C}(x)=\pi_\mathcal{C}(x')$, then
\[ \psi(x') = \mathrm{sgn}(\sigma)\prod_{a=1}^Bs_a\, \psi(x), \]
where $(\sigma,s)\in\Sigma_B\ltimes (\ZZ_2)^B$ is the unique group element that maps $x$ to $x'$.
\end{proposition}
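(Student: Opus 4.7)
The plan is to recognise the proposition as a computation of the induced map on fundamental groups and then to evaluate this map on generators. Since $\tilde{\mathcal{C}}_B$ has just been shown to be the universal cover of $\mathcal{C}_B$ and the diagram $\pi_\mathcal{S}\circ\tilde{F}=F\circ\pi_\mathcal{C}$ commutes, the lift $\tilde{F}$ intertwines the deck actions via the induced homomorphism
\[
F_*:\pi_1(\mathcal{C}_B)=\Sigma_B\ltimes(\ZZ_2)^B\longrightarrow\pi_1(\mathcal{S}_B)=\ZZ_2 .
\]
In concrete terms: given $x,x'$ with $\pi_\mathcal{C}(x)=\pi_\mathcal{C}(x')$, the element $(\sigma,s)$ from the proposition is the unique deck transformation $\Phi_{(\sigma,s)}$ sending $x\mapsto x'$, and equivariance of $\tilde{F}$ gives $\tilde{F}(x')=F_*(\sigma,s)\cdot\tilde{F}(x)$, where $F_*(\sigma,s)$ acts on $\tilde{\mathcal{S}}_B$ as a deck transformation of $\pi_\mathcal{S}$. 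By the Finkelstein--Rubinstein constraint \eqref{FR constraint}, $\Psi$ picks up a factor of $-1$ under the non-trivial deck transformation of $\pi_\mathcal{S}$, so if $\epsilon:\ZZ_2\to\{\pm1\}$ denotes the non-trivial character,
\[
\psi(x')=\Psi(\tilde{F}(x'))=\epsilon(F_*(\sigma,s))\,\Psi(\tilde{F}(x))=\epsilon(F_*(\sigma,s))\,\psi(x).
\]
It therefore suffices to show that $\epsilon\circ F_*(\sigma,s)=\mathrm{sgn}(\sigma)\prod_{a=1}^B s_a$.

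Since $\epsilon\circ F_*$ is a homomorphism into $\{\pm 1\}$, it is determined by its values on a generating set. I would take the transpositions $\tau_{ab}\in\Sigma_B$ together with the elementary sign-flips $e_a=(\mathrm{id},(1,\ldots,-1,\ldots,1))\in(\ZZ_2)^B$. For $e_a$, choose the explicit loop that rotates $q_a$ continuously through $2\pi$ about a fixed axis, holding all other $\xvec_b$ and $q_b$ constant. Its image under $\hat{F}$ is, by construction of the hedgehog \eqref{oriented hedgehog}, a loop in $\mathcal{S}_B$ that performs a $2\pi$ internal rotation on a single well-separated hedgehog factor; this is the canonical Finkelstein--Rubinstein generator of $\pi_1(\mathcal{S}_B)\cong\ZZ_2$, so $\epsilon(F_*(e_a))=-1$. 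For $\tau_{ab}$, choose a loop that continuously exchanges the positions and orientations of two well-separated particles while keeping all other data fixed; its image under $\hat{F}$ is the classical exchange loop for two well-separated unit skyrmions, which is also homotopic to the Finkelstein--Rubinstein generator, giving $\epsilon(F_*(\tau_{ab}))=-1$. Extending multiplicatively across the semi-direct product relations yields $\epsilon(F_*(\sigma,s))=\mathrm{sgn}(\sigma)\prod_a s_a$, which is the desired formula.

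The main obstacle is the identification of $F_*(\tau_{ab})$ with the non-trivial element of $\pi_1(\mathcal{S}_B)$. While this is the standard spin-statistics observation of Finkelstein and Rubinstein, making it rigorous inside the relativised product ansatz \eqref{relativised product ansatz} requires exhibiting an explicit homotopy in $\mathcal{S}_B$ between the exchange loop of two well-separated hedgehogs and a single $2\pi$-rotation of one of them, while checking that the argument of $P$ remains non-vanishing throughout the homotopy so that $\hat{F}$ is defined. The sign-flip generators are by comparison essentially trivial, since $q_a\mapsto -q_a$ realises the universal $2\pi$ loop in the orientational factor in a manifestly continuous way. Once these two generator computations are in place, the homomorphism property of $\epsilon\circ F_*$ delivers the proposition without further calculation.
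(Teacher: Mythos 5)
Your argument is essentially the same as the paper's: both reduce to the generators of $\Sigma_B\ltimes(\ZZ_2)^B$ (transpositions and single sign flips) and invoke the two Finkelstein--Rubinstein results that a $2\pi$ rotation of a single well-separated hedgehog and an exchange of two such hedgehogs each produce non-contractible loops in $\mathcal{S}_B$, then extend by the homomorphism property. Your framing in terms of the deck-transformation equivariance of $\tilde{F}$ and the induced homomorphism $F_*$ on $\pi_1$ is a slightly more formal packaging of the same idea, and your closing remark about verifying non-vanishing of the argument of $P$ along the exchange homotopy flags a genuine technical point that the paper leaves implicit.
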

\begin{proof}
This result follows almost directly from two important results of Finkelstein and Rubinstein \cite{finrub}.  First, if $x\in\tilde{\mathcal{C}}_B$, and $t^a\in (\ZZ_2)^B$ is the transformation that changes the sign of $q_a$ (only) and $\alpha$ is a path in $\tilde{\mathcal{C}}_B$ from $x$ to $\Phi_{(\Id,t^a)}(x)$ then $F\circ\pi_C\circ\alpha$ is non-contractible.  Second, if $\sigma\in\Sigma_B$ is a transposition and $\beta$ is a path in $\tilde{\mathcal{C}}_B$ from $x$ to $\Phi_{(\sigma,1)}(x)$ then $F\circ\pi_C\circ\alpha$ is also non-contractible.  Thus the constraint \eqref{FR constraint} implies that
\[ \psi(\Phi_{(\Id,t^a)}(x)) = -\psi(x)\qquad\mbox{and} \qquad\psi(\Phi_{(\sigma,1)}(x))=-\psi(x). \]
Now any element of $\Sigma_B\ltimes(\ZZ_2)^B$ can be written as a product of sign flips and transpositions, so the claim follows.
\end{proof}

\subsection{Rigid body quantization}

% FR factors in the point particle model
In rigid body quantization, motion is restricted to the rotation-isorotation orbit of a fixed minimum $x=(\xvec_1,\ldots,\xvec_B,q_1,\ldots,q_B)$ of the classical energy.  Thus the classical configuration space is taken to be $G=\mathrm{SU}(2)_I\times\mathrm{SU}(2)_J$ with each $(g,h)\in G$ identified with
\[ (g,h)\cdot (\mathbf{x}_a,q_a) = (R(h)\mathbf{x}_a,hq_ag^{-1})\in\tilde{\mathcal{C}}_B. \]
The wavefunction $\psi:G\to\CC$ is required to solve a Schr\"odinger equation $\hat{H}\psi=E\psi$, where $\hat{H}$ is (up to a constant factor) the Laplacian operator
on $G$ associated with the left invariant metric $\Lambda$, the inertia tensor of $x$.

In order to model a nucleus of definite spin and isospin one assumes that $\psi$ is an eigenstate of the total isospin and spin operators with isospin $I$ and spin $J$.  This is consistent with the Schr\"odinger equation because the hamiltonian commutes with these operators.  By the Peter-Weyl theorem, any such $\psi$ is a finite sum of functions of the form
\[ \psi(g,h) = \langle w, \rho_I(g)\otimes\rho_J(h) v\rangle,\quad v,w\in V_{I,J}:=\CC^{2I+1}\otimes\CC^{2J+1}, \]
where for $\ell\in\frac{1}{2}\N$, $\rho_\ell:\mathrm{SU}(2)\to\mathrm{SU}(2\ell+1)$ denotes the spin-$\ell$ representation of $\mathrm{SU}(2)$.  For each $w\in V_{I,J}$ denote by $V^{(w)}$ the
subspace of functions with $w$ fixed. Clearly $V^{(w)}\cong V_{I,J}$ for all $w\neq 0$. Furthermore, $\hat{H}$ preserves $V^{(w)}$, and its action on every $V^{(w\neq0)}$ is unitarily equivalent. Hence we may, without loss of generality, fix $w\neq 0$, and represent $\hat{H}$ by a linear operator $H_{I,J}$ on $V^{(w)}\cong V_{I,J}$. To write this operator down explicitly, it is useful to introduce the usual basis for $\g=\su(2)_I\oplus\su(2)_J$, namely
\beq
K_i=-\frac\ii2\sigma_i\oplus 0,\qquad K_{i+3}=0\oplus\left(-\frac\ii2\sigma_i\right),\qquad i=1,2,3,
\eeq
with respect to which $\Lambda$ is a symmetric $6\times 6$ real matrix. Denote its entries $\Lambda_{ab}$ and those of its inverse $\Lambda^{ab}$. Then $\hat{H}$ acts on $V^{(w)}\cong V_{I,J}$ as
\begin{equation}
\label{algebraic hamiltonian}
H_{I,J}:v\mapsto -\frac{\hbar^2}{2}\Lambda^{ab}\rho^\ast_{I,J}(K_a)\rho^\ast_{I,J}(K_b)v,
\end{equation}
where  $\rho^{\ast}_{I,J}:\mathfrak{su}(2)_I\oplus\mathfrak{su}(2)_J\to \mathfrak{su}((2I+1)(2J+1))$ is the Lie algebra representation associated to $\rho_I\otimes\rho_J$:
\[ \rho^\ast_{I,J}(K_a) = \begin{cases} \rho^\ast_I(K_a)\otimes\mathrm{Id}_{2J+1} & a=1,2,3 \\ \mathrm{Id}_{2I+1}\otimes\rho^\ast_J(K_a) & a=4,5,6. \end{cases} \]

Suppose that $(g_0,h_0)\in\mathrm{SU}(2)_I\times\mathrm{SU}(2)_S$ is a symmetry of $x$, i.e.\ that there exists a $(\sigma,s)\in\Sigma_B\ltimes (\ZZ_2)^B$ such that
\[ (R(h_0)\mathbf{x}_a,h_0q_ag_0^{-1}) = (\mathbf{x}_{\sigma(a)},s_{\sigma(a)}q_{\sigma(a)}). \]
Then the Finkelstein-Rubinstein constraints described in the previous section imply that
\[ \psi(gg_0,hh_0) = \mathrm{sgn}(\sigma)\left(\prod_{a=1}^Bs_a\right)\, \psi(g,h)\quad\forall (g,h)\in\mathrm{SU}(2)_I\times\mathrm{SU}(2)_J. \]
This in turn implies that
\begin{align}  
\label{FR algebraic}
\rho_I(g_0)\otimes\rho_J(h_0) v &= \chi(g_0,h_0)v, \\
\chi(g_0,h_0) &:= \mathrm{sgn}(\sigma)\prod_{a=1}^Bs_a.
\end{align}
Thus each element of the symmetry group of $x$ determines a linear constraint on $v$, and $v$ therefore must belong to the subspace $V^{(x)}_{I,J}\subseteq\CC^{2I+1}\otimes\CC^{2J+1}$ on which all of these constraints are satisfied simultaneously.  Therefore, to find the lowest energy quantized state of a configuration with isospin $I$ and spin $J$ one needs to find the smallest eigenvalue $\Delta$ of the restriction of $H_{I,J}$ to $V^{(x)}_{I,J}$.

One important consequence of equation \eqref{FR algebraic} is that nucleons have half-integer spin and isospin.  This can be deduced using the trivial symmetries $(g_0,h_0)=(1,-1)$ and $(-1,1)$, which are symmetries of any configuration.  Both of these transformations negate all of the orientations $q_a$, so the sign appearing on the right of eq.\ \eqref{FR algebraic} is $(-1)^B$.  Now $\rho_I(-1)$ equals $-\mathrm{Id}_{2I+1}$ if $I$ is half-integer and $\mathrm{Id}_{2I+1}$ if $I$ is integer, so
$V^{(x)}_{I,J}=\{0\}$ if $I$ is half integer and $B$ is even, or if $I$ is integer and $B$ is odd. Hence, there are no half integer isospin energy eigenstates when $B$ is even, and
no integer isospin energy eigenstates when $B$ is odd. Similar comments apply to spin.

\subsection{Two particles}

We now illustrate the quantization procedure for the simple example of two particles.  After rotation and centreing, the energy minimizer 2a is
\begin{equation}
\label{2a}
x_{2a}=(\mathbf{x}_1,\mathbf{x}_2,q_1,q_2) = \left( -\frac{\lambda}{\sqrt{2}}\mathbf{e}_1,\,\frac{\lambda}{\sqrt{2}}\mathbf{e}_1,\,1,\,\mathbf{k} \right),
\end{equation}
where the lattice scale parameter $\lambda$ takes the value $2.9$ to minimize energy and $\mathbf{e}_1=(1,0,0)$.

This configuration has $D_2$ dihedral symmetry, and the nontrivial elements of the symmetry group are $(\mathbf{i},\mathbf{j})$, $(\mathbf{i},\mathbf{i})$,  and $(1,\mathbf{k})$.  The actions of these transformations, and the corresponding signs $\chi(g,h)$, are as follows:
\begin{align*}
(\mathbf{i},\mathbf{j}):(\mathbf{x}_1,\mathbf{x}_2,q_1,q_2)&\mapsto \left( \frac{\lambda}{\sqrt{2}}\mathbf{e}_1,\,-\frac{\lambda}{\sqrt{2}}\mathbf{e}_1,\,\mathbf{k},\,1\right) & \chi(\mathbf{i},\mathbf{j}) &= -1 \\
(\mathbf{i},\mathbf{i}):(\mathbf{x}_1,\mathbf{x}_2,q_1,q_2)&\mapsto \left( -\frac{\lambda}{\sqrt{2}}\mathbf{e}_1,\,\frac{\lambda}{\sqrt{2}}\mathbf{e}_1,1,\,\,-\mathbf{k} \right) & \chi(\mathbf{i},\mathbf{i}) &= -1  \\
(1,\mathbf{k}):(\mathbf{x}_1,\mathbf{x}_2,q_1,q_2)&\mapsto \left( \frac{\lambda}{\sqrt{2}}\mathbf{e}_1,\,-\frac{\lambda}{\sqrt{2}}\mathbf{e}_1,\,\mathbf{k},\,-1 \right)  & \chi(1,\mathbf{k}) &= 1. \\
\end{align*}
We will briefly explain how these signs $\chi(g_0,h_0)$ have been determined.  The first transformation $(\mathbf{i},\mathbf{j})$ permutes the two particles but does not change any signs.  It therefore has $\mathrm{sgn}(\sigma)=-1$, $s_1=s_2=1$, and consequently $\chi(\mathbf{i},\mathbf{j})=-1$.  The second does not permute the particles but does change the sign of exactly one orientation, so has $\chi(\mathbf{i},\mathbf{i}) = -1$ .  The third permutes the particles and changes one sign, so has  $\chi(1,\mathbf{k})=(-1)^2=1$.

Now we determine the subspaces $V^{(2a)}_{I,J}$ allowed by the constraint \eqref{FR algebraic}.  As explained above, the isospin and spin quantum numbers $I,J$ are necessarily integers.  Since some of the symmetries have negative signs in \eqref{FR algebraic} states with $(I,J)=(0,0)$ are forbidden.  Thus we consider the possibilities $(I,J)=(1,0)$ or $(0,1)$.
If $V^{(2a)}_{1,0}$ or $V^{(2a)}_{0,1}$ is nontrivial, states with higher spin and isospin certainly have higher energy.  We need to use the spin 1 representation $\rho_1$ of $\mathrm{SU}(2)$, for which \[ \rho_1(\mathbf{i}) = \begin{pmatrix}1&0&0\\0&-1&0\\0&0&-1\end{pmatrix},\,\rho_1(\mathbf{j}) = \begin{pmatrix}-1&0&0\\0&1&0\\0&0&-1\end{pmatrix},\,\rho_1(\mathbf{k}) = \begin{pmatrix}-1&0&0\\0&-1&0\\0&0&1\end{pmatrix}. \]
In the case $(I,J)=(1,0)$ the constraints \eqref{FR algebraic} reduce to $\rho_1(\mathbf{i})v=-v$, so their solution space is
\[ V^{(2a)}_{1,0} = \{ (0,v_2,v_3)^T\::\: v_2,v_3\in\CC\}. \]
In the case $(I,J)=(0,1)$ the constraints \eqref{FR algebraic} say that $\rho_1(\mathbf{k})v=v$ and $\rho_1(\mathbf{i})v=\rho_1(\mathbf{j})v=-v$, and their solution space is
\[ V^{(2a)}_{0,1} = \{ (0,0,v_3)^T\::\: v_3\in\CC\}. \]

The inertia tensor of the configuration \eqref{2a} is
\[
\Lambda = \left( \begin{array}{ccc|ccc} 
2L & 0 & 0 & 0 & 0 & 0 \\
0 & 2L & 0 & 0 & 0 & 0 \\
0 & 0 & 2L & 0 & 0 & -2L \\
\hline
0 & 0 & 0 & 2L & 0 & 0 \\
0 & 0 & 0 & 0 & 2L+M\lambda^2 & 0 \\
0 & 0 & -2L & 0 & 0 & 2L+M\lambda^2
\end{array}\right),
\]
{whose inverse is
\[
\Lambda^{-1} = \left( \begin{array}{ccc|ccc} 
\frac{1}{2L} & 0 & 0 & 0 & 0 & 0 \\
0 & \frac{1}{2L} & 0 & 0 & 0 & 0 \\
0 & 0 & \frac{2L+M\lambda^2}{2LM\lambda^2} & 0 & 0 & \frac{1}{M\lambda^2} \\
\hline
0 & 0 & 0 & \frac{1}{2L} & 0 & 0 \\
0 & 0 & 0 & 0 & \frac{1}{2L+M\lambda^2} & 0 \\
0 & 0 & \frac{1}{M\lambda^2} & 0 & 0 & \frac{1}{M\lambda^2}
\end{array}\right).
\]}
The representation $\rho_1^\ast$ of the Lie algebra $\mathfrak{su}(2)$ is
\[ 
\rho_1^\ast\left(\sfrac{\mathbf{i}}{2}\right) = \begin{pmatrix} 0&0&0\\0&0&-1\\0&1&0\end{pmatrix},\,
\rho_1^\ast\left(\sfrac{\mathbf{j}}{2}\right) = \begin{pmatrix} 0&0&1\\0&0&0\\-1&0&0\end{pmatrix},\,
\rho_1^\ast\left(\sfrac{\mathbf{k}}{2}\right) = \begin{pmatrix} 0&-1&0\\1&0&0\\0&0&0\end{pmatrix}.
\]
In the case $(I,J)=(1,0)$ the hamiltonian defined in \eqref{algebraic hamiltonian} is
\begin{align*}
H_{1,0} &= -\frac{\hbar^2}{4L}\left(\rho_1^\ast\left(\sfrac{\mathbf{i}}{2}\right)^2+\rho_1^\ast\left(\sfrac{\mathbf{j}}{2}\right)^2\right)-\frac{\hbar^2(2L+M\lambda^2)}{4LM\lambda^2}\rho_1^\ast\left(\sfrac{\mathbf{k}}{2}\right)^2 \\
&=\begin{pmatrix}\frac{\hbar^2}{2L}+\frac{\hbar^2}{2M\lambda^2}&0&0\\0&\frac{\hbar^2}{2L}+\frac{\hbar^2}{2M\lambda^2}&0\\0&0&\frac{\hbar^2}{2L}\end{pmatrix}.
\end{align*}
In the case $(I,J)=(0,1)$ it is
\begin{align*}
H_{0,1} &= -\frac{\hbar^2}{4L}\rho_1^\ast\left(\sfrac{\mathbf{i}}{2}\right) - \frac{\hbar^2}{4L+2M\lambda^2}\rho_1^\ast\left(\sfrac{\mathbf{j}}{2}\right)^2 -\frac{\hbar^2}{2M\lambda^2}\rho_1^\ast\left(\sfrac{\mathbf{k}}{2}\right)^2 \\
&= \begin{pmatrix}\frac{\hbar^2}{4L+2M\lambda^2}+\frac{\hbar^2}{2M\lambda^2}&0&0\\0&\frac{\hbar^2}{4L}+\frac{\hbar^2}{2M\lambda^2}&0\\0&0&\frac{\hbar^2}{4L}+\frac{\hbar^2}{4L+2M\lambda^2}\end{pmatrix}.
\end{align*}
Therefore, the lowest eigenvalue of the restriction of $H$ to $V^{(2a)}_{I,J}$ is $\hbar^2/2L$ in the case $(I,J)=(1,0)$ and $\hbar^2/4L+\hbar^2/(4L+2M\lambda^2)$ in the case $(I,J)=(0,1)$.  Since
\[ \frac{\hbar^2}{4L}+\frac{\hbar^2}{4L+2M\lambda^2}<\frac{\hbar^2}{2L}, \]
the groundstate has isospin 0 and spin 1.  The quantum mechanical correction to the energy is
\[ \Delta = \frac{\hbar^2}{4L}+\frac{\hbar^2}{4L+2M\lambda^2} \approx 4.123, \]
using the value $\lambda=2.922$ of the lattice corresponding the minimum of $V$.  
Hence, rigid body quantization of the $B=2$ energy minimizer correctly reproduces the spin and isospin of the deuteron.

\subsection{Automation of the quantization procedure}
Most of the quantization procedure described above is linear algebraic and so easily automated, but determining the symmetries of a configuration and the associated signs arising in the Finkelstein-Rubinstein constraints can be tricky.  We have developed an algorithm that finds symmetries of configurations of particles, and hence have been able to automate the entire quantization procedure.

Our symmetry-finding algorithm first identifies rotational symmetries of the set of particle positions, and then determines which of these can be lifted to symmetries in $\mathrm{SU}(2)_I\times\mathrm{SU}(2)_S$.  To find spatial symmetries it first translates the configuration so that its centre of mass is at the origin and rotates it so that its second moment matrix
\eqref{secmommat} is diagonal.  It then treats separately three cases corresponding to different degeneracies of the diagonal entries, i.e.\ eigenvalues, of the second moment matrix.

If the eigenvalues are all different then the symmetry group is a subgroup of the group $D_2\in\mathrm{SO}(3)$ consisting of rotations about the three coordinate axes through $\pi$.  Each element of this group is applied to the particle positions and a distance between the resulting configuration and the original configuration is measured (taking account of permutations).  If the distance is sufficiently small then the group element is accepted as a symmetry.

If two eigenvalues are equal and the third distinct then the spatial symmetry group is a subgroup of $O(2)$.  Since, by observation, none of the minimizers with three or more particles have continuous symmetry, the symmetry group is assumed to be dihedral or cyclic.  To identify cyclic symmetries, rotations through angle $\theta$ are applied to the configuration and the distance from the resulting configuration to the original configuration measured as a function of $\theta$.  Minima of this function close to zero are interpreted as symmetries.  Dihedral symmetries are identified in a similar way.

If all three eigenvalues are equal then the symmetry group is likely to be a discrete subgroup of $\mathrm{SO}(3)$ which is neither dihedral nor cyclic.  Since icosahedral symmetry is not compatible with the FCC lattice our algorithm works on the assumption that the symmetry group is either the octahedral group $O$ or the tetrahedral group $T< O$.  It computes the fourth moment tensor
\[ M^{(4)}_{ijkl} = \sum_{a=1}^B x_i^a x_j^a x_k^a x_l^a \]
and finds minima or maxima of the function $S^2\to\RR$ defined by
\[ \mathbf{n}\mapsto M^{(4)}_{ijkl}n_in_jn_kn_l,\quad \mathbf{n}\cdot\mathbf{n}=1. \]
This is a polynomial function on the sphere of degree less than or equal to 4 containing terms of even degree only.  It is known that there are only two linearly independent functions of this type with tetrahedral symmetry, namely the constant function and the function
\[ \mathbf{n}\mapsto n_1^4+n_2^4+n_3^4. \]
The latter furthermore has octahedral symmetry and its maxima are at the points where the coordinate axes intersect the sphere.  Therefore, if the configuration has either octahedral or tetrahedral symmetry and the function constructed from the fourth moment tensor is non-constant then either its maxima are at mutually-orthogonal points on the sphere or its minima are.  The algorithm seeks either a pair of orthogonal maxima or a pair of orthogonal minima and rotates these to lie on two of the coordinate axes.  It then tests whether each element of the octahedral group is a symmetry of the configuration.

Once the configuration's rotational symmetries are known, the algorithm determines whether each lifts to a full rotation-isorotation symmetry of $(\xvec_1,\ldots,\xvec_B,q_1,\ldots,q_B)$. Given a rotational symmetry $R$, we choose $h\in \SU(2)$ such that $R=R(h)$. Being a rotational symmetry means precisely that $R(h)\xvec_a=\xvec_{\sigma(a)}$ for some permutation $\sigma$. Note that spatial rotations change the orientations also, $q_a\mapsto q_a h^{-1}$. The spatial symmetry $h$ lifts to a full symmetry if there exists $g\in\SU(2)$ such that
$gq_ah^{-1}=\pm q_{\sigma(a)}$ for all $a$. If such an isorotation $g$ exists, it is unique up to sign. In fact it must be
$g=q_{\sigma(1)}hq_1^{-1}$ (or minus this). Hence, our algorithm computes $q_a':=q_{\sigma(1)}hq_1^{-1}q_ah^{-1}$ for each $a=2,\ldots,B$
and tests whether $q_a'=\pm q_{\sigma(a)}$ (to some numerical tolerance) for all $a$. If so, $(g,h)$ is accepted as a full symmetry, and its FR factor $\chi(g,h)$ is readily computed from the sign of the permutation $\sigma$ and the signs occuring in $q_a'=\pm q_a$.
If not, the spatial symmetry $R(h)$ is discarded.

The output of our algorithm is recorded in table \ref{tab:energies}.  For each classical energy minimizer we record the spin and isospin quantum numbers corresponding to the ground state, and the quantum mechanical energy (defined to be the sum of the classical energy $V$ and the $O(\hbar^2)$ correction $\Delta$ calculated by our algorithm).  The numerical value of $\hbar$ is fixed by the calibration proposed in \cite{gilharspe}.  We have been using energy and length units $F_\pi/4g\sqrt{1-\alpha}$ and $2\sqrt{1-\alpha}/F_\pi g$.  In natural units Planck's constant is 1, so in our units it equals
\[ \frac{4g\sqrt{1-\alpha}}{F_\pi}\frac{F_\pi g}{2\sqrt{1-\alpha}}=2g^2. \]
In \cite{gilharspe} the dimensionless parameter $g$ was determined to be 3.96 by comparing the charge radii of the one-skyrmion and the proton, so the numerical value for $\hbar^2$ is $4\times3.96^4\approx799.5$.

\begin{figure}[htb]
\begin{center}
\includegraphics[scale=1.0]{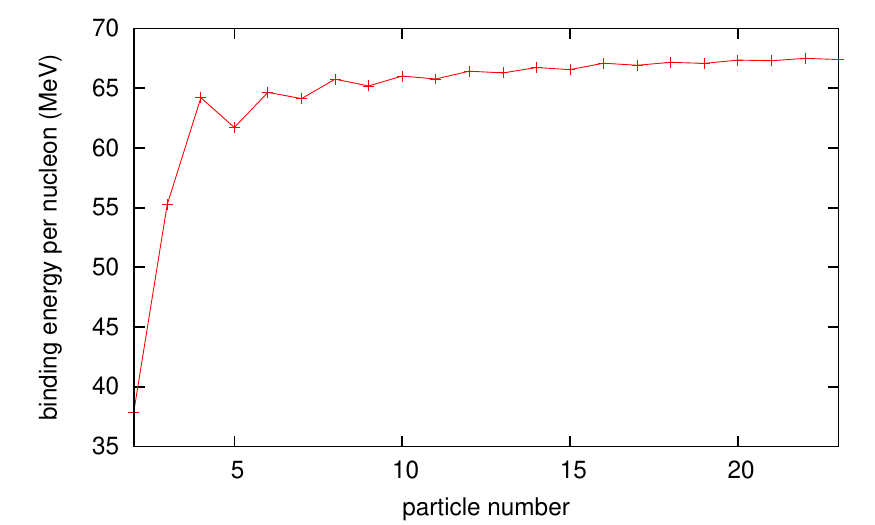}
\end{center}
\caption{Binding energies calculated using rigid body quantization.}
\label{fig:quantum BE}
\end{figure}

In most cases the configuration with the lowest quantum energy is the same as the configuration with the lowest classical energy.  There are two exceptions to this trend.  The quantum corrections to the 6-particle minimizers are relatively large and their order is reversed, so that 6c is the lightest and 6a the heaviest.  The two configurations 10a and 10b have almost identical classical energies, and after quantization the order of their energies is reversed.

The table also lists spin and isospin quantum numbers of the lightest nucleus for each mass number.  In 12 out of 22 cases there is a quantized point particle configuration with the same quantum numbers.  Sometimes the configuration with the correct quantum numbers is not that with lowest energy: for example, 5b has the same spin and isospin as ${}^5\mathrm{He}_2$, but its energy exceeds that of 5a.

Including the quantum corrections gives new predictions for nuclear masses and binding energies, which are plotted in figure \ref{fig:quantum BE}.  The mass of a quantized configuration of $B$ particles is $BM+V+\Delta$, where $V+\Delta$ is the quantum energy recorded in \ref{tab:energies}.  The binding energy is the difference between this quantity and $B$ times the quantized mass of 1 particle.  The calculation of the quantum mechanical correction to the mass of one particle is a standard calculation similar to those described above; the end result is 
\[ M_1 = M + \frac{3\hbar^2}{8L}. \]
The binding energy per nucleon is therefore
\[ \frac{1}{B}\left( B\left(M + \frac{3\hbar^2}{8L}\right) - (BM+V+\Delta)\right) = \frac{3\hbar^2}{8L} - \frac{V+\Delta}{B}. \]
Note that $3\hbar^2/8L \approx 5.521$.

Rigid body quantization has the effect of increasing the binding energy of nuclei, so that they are roughly 5\% of the 1-skyrmion mass rather than 1\%.  It is easy to see why: the quantum correction to the 1-skyrmion mass represents about 5\% of its total mass, whereas the quantum corrections to the masses of larger nuclei represent a much smaller percentage of their total mass.  This means that the quantum corrections to binding energies are also around 5\% of the 1-skyrmion mass, and much larger than the classical binding energies.

The fact that binding energies calculated by rigid body quantization are too large does not represent a failure of the lightly bound Skyrme model, but rather illustrates the pitfalls of rigid body quantization itself.  A collection of $B$ point particles has $6B-3$ degrees of freedom, but in rigid body quantization at most 6 of these are quantized.  Only in the case $B=1$ are all degrees of freedom quantized, so rigid body quantization systematically underestimates the mass of configurations with a large number of particles.  From the point of view of the lightly bound Skyrme model, the degrees of freedom corresponding to moving particles are almost massless, because 1-skyrmions interact only weakly, so arguably these are of comparable importance to the massless degrees of freedom studied in rigid body quantization.

\section{Concluding remarks}
\label{sec:pontificate}\news

We have constructed a simple point particle model of lightly bound skyrmions which almost flawlessly reproduces the results of 
numerical field theoretic energy minimization for charges 1 to 8.\, The only exception is charge 6.\,  Here, the point particle model predicts minimizers with shapes, in order of ascending energy, octahedron, bowtie and pyramid-plus-one, whereas full field simulations find that the correct order is bowtie, octohedron, pyramid-plus-one, albeit with the first two of these very close to degenerate. Alongside this minor blemish one should set some unexpected successes: the point particle model predicted previously unknown energy minimizers at charges 5, 7 and 8, all of which  corresponded to local energy minimizers of the field theory with correct energy ordering. This includes the (so far) lowest energy skyrmion at charge 7.\, The point particle model makes a simple prediction for the inertia tensors of lightly bound skyrmions which, with only two free parameters, fits the field theoretic data for the global minimizers with $1\leq B\leq 8$ to within 
10\%. In judging this, one should bear in mind that an inertia tensor is not a single number, but rather (after accounting for symmetries) 15 independent numbers, so we are actually fitting 120 independent quantities here. 

Having checked consistency with field theory simulations for $1\leq B\leq 8$, we then proceeded to generate local energy minimizers of the point particle model for $9\leq B\leq 23$, where full field simulations are, so far, unavailable. We found that the number of 
nearly degenerate local energy minimizers grows rapidly with $B$, that minimizers consistently resemble subsets of a face centred cubic lattice, with internal orientations correlated with lattice position, and that minimizers often have one fewer than  the maximum possible number of nearest-neighbour bonds. 
We have, furthermore, implemented a simple rigid body quantization scheme for all the local minima we found ($1\leq B\leq 23$). 
As part of this, we devised an automated algorithm to compute the spin-isospin symmetry group of an oriented point cloud, which simultaneously computes the Finkelstein-Rubinstein constraint associated with each symmetry. This allowed us to compute the spin and isospin of the quantum ground states associated (in rigid body quantization) with each local energy minimizer. Since classical binding energies are so small in the lightly bound model, quantization occasionally altered the energy ordering of local minima with a given charge $B$. For 12 baryon numbers (out of 23), this simple quantization procedure produced states corresponding to the spin-isospin data of the lightest nucleus of that baryon number. 

Our numerical scheme to find energy minimizers of the point particle model had two steps: first a crystal-building algorithm was run to generate a subset of the FCC lattice with sufficiently many (nearest neighbour) bonds. Given such a subset, an initial point particle configuration was constructed with particles at the occupied vertices, their internal orientations being fixed by a simple colouring rule, the lattice length scale being chosen to minimize total interaction energy. The second step was to relax this initial FCC subset using a simulated annealing algorithm which allowed the particle positions, and internal orientations, to vary continuously. The results suggest that, in retrospect, this second step is actually superfluous, since the relaxed configuration always stays very close to some FCC lattice
(see figure \ref{fig:wwtp?}). If one merely wishes to find good approximations to classical energy minimizers of the lightly bound Skyrme model, it would seem that considering only FCC lattice subsets, with the lattice scale left as a free parameter, is a fast and effective strategy. It is possible that low energy FCC subsets may also provide useful sets of initial data for energy minimization in more standard variants of the Skyrme model. Certainly this is a quick and convenient means to generate rather uniform initial data of a qualitatively new kind, not obtainable from rational map or alpha particle clustering methods. 

A more interesting problem is to find a better quantization scheme than rigid body quantization. In principle, one could attempt to solve the full Sch\"odinger equation on $\tilde{\mathcal{C}}_B$, subject to the FR constraints. For $B=2$, there is sufficient symmetry that this may well be tractable. For larger $B$, however, it is clearly hopeless. Instead, one should attempt to implement some form of ``vibrational'' quantization scheme, as used for the conventional Skyrme model in \cite{hal,halkinman}. This requires one to find a low-dimensional moduli space of configurations, including all relevant local energy minima, but also configurations interpolating between them, which captures the most important vibrational processes of the classical skyrmion. In this regard, the full point particle model, with positions and orientations allowed to leave the set of FCC configurations, will be essential. Indeed  the model of point skyrmions introduced here may well prove to be an ideal testing ground for vibrational quantization techniques. 

\bigskip
\noindent\textbf{Acknowledgements}  We are grateful to Paul Sutcliffe for helpful conversations. The field theory simulations were performed using code originally developed in collaboration with Juha J\"aykk\"a.

\end{document}